\DeclareSymbolFont{matha}{OML}{txmi}{m}{it}
\DeclareMathSymbol{\varv}{\mathord}{matha}{118}
\newcommand{\finalcells}[2]{%
	\begingroup\sbox0{\begin{minipage}{3cm}\raggedright#1\end{minipage}}%
	\sbox2{\begin{minipage}{3cm}\raggedright#2\end{minipage}}%
	\xdef\finalheight{\the\dimexpr\ht0+\dp0+\smallskipamount\relax}%
	\xdef\finalheightB{\the\dimexpr\ht2+\dp2+\smallskipamount\relax}%
	\ifdim\finalheightB>\finalheight
	\global\let\finalheight\finalheightB
	\fi\endgroup
	\begin{minipage}[t][\finalheight][t]{3cm}\raggedright#1\end{minipage}&
	\begin{minipage}[t][\finalheight][t]{3cm}\raggedright#2\end{minipage}}
\begin{document}
	\title{Scalable Cell-Free Massive MIMO Systems: Impact of Hardware Impairments}
	\author{Anastasios Papazafeiropoulos, Emil Bj{\"o}rnson, Pandelis Kourtessis, Symeon Chatzinotas, John M. Senior \vspace{-5mm} \\
		%
		\thanks{A. Papazafeiropoulos is with the Communications and Intelligent Systems Research Group, University of Hertfordshire, Hatfield AL10 9AB, U. K., and with the SnT at the University of Luxembourg, Luxembourg. E. Bj{\"o}rnson is with the KTH Royal Institute of Technology, 16440 Kista, Sweden, and Link{\"o}ping University, 58183 Link{\"o}ping, Sweden. P. Kourtessis and John M.Senior are with the Communications and Intelligent Systems Research Group, University of Hertfordshire, Hatfield AL10 9AB, U. K., S. Chatzinotas is with the SnT at the University of Luxembourg, Luxembourg. E-mails: tapapazaf@gmail.com, emilbjo@kth.se, \{p.kourtessis,j.m.senior\}@herts.ac.uk, symeon.chatzinotas@uni.lu.}
		\thanks{Parts of this work were presented at the IEEE International Symposium on Personal, Indoor and Mobile Radio Communications (PIMRC), London, U.K., Sep. 2020~\cite{Papazafeiropoulos2020}.}}
	\maketitle

	\begin{abstract}	
		Standard cell-free (CF) multiple-input-multiple-output (mMIMO) systems is a promising technology to cover the demands for higher data rates in fifth-generation (5G) networks and beyond. These systems assume a large number of distributed access points (APs) using joint coherent transmission to communicate with the users. However, CF mMIMO systems present an increasing computational complexity as the number of users increases. Scalable cell-free CF (SCF) systems have been proposed to face this challenge. Given that the cost-efficient deployment of such large networks requires low-cost transceivers, which are prone to unavoidable hardware imperfections, realistic evaluations of SCF mMIMO systems should take them into account before implementation. Hence, in this work, we focus on the impact of hardware impairments (HWIs) on the SCF mMIMO systems through a general model accounting for both additive and multiplicative impairments. Notably, there is no other work in the literature studying the impact of phase noise (PN) in the local oscillators (LOs) of CF mMIMO systems or in general the impact of any HWIs in SCF mMIMO systems. In particular, we derive upper and lower bounds on the uplink capacity accounting for HWIs. Moreover, we obtain the optimal hardware-aware (HA) partial minimum mean-squared error (PMMSE) combiner. Especially, the lower bound is derived in closed-form using the theory of deterministic equivalents (DEs). Among the interesting findings, we observe that separate LOs (SLOs) outperform a common LO (CLO), and the additive transmit distortion degrades more the performance than the additive receive distortion.
	\end{abstract}
	
	\begin{keywords}
		Cell-free massive MIMO systems, user-centric 5G networks, transceiver hardware impairments, MMSE processing, capacity bounds.
	\end{keywords}
	
	\section{Introduction}
	The radical new concept of massive multiple-input-multiple-output (mMIMO), initially proposed by Marzetta \cite{Marzetta2010}, has become a mainstream and mature technology coping with the increasing demand for high data rates in the fifth generation (5G) wireless networks \cite{Marzetta2016,massivemimobook}. In particular, mMIMO systems, which are aimed at serving a smaller number of users than antennas, improving the spectral efficiency (SE) by at least $ 10 \times $ compared to previous generation networks by utilizing simple signal processing and equipping the existing cell-sites with more antennas \cite{massivemimobook}. Notably, the mMIMO paradigm appears under two design approaches related to the antennas deployment: a co-located antenna array as in 5G or a geographically distributed array with a large number of access points (APs) known as cell-free (CF) mMIMO systems. Both approaches exploit coherent processing, where multiple antennas/APs support processing the signal of the same user. This paper considers the second scenario, which is basically a beyond 5G technology.
	
	The CF mMIMO infrastructure was proposed in \cite{Ngo2017} as a promising way to alleviate the cell boundary concept from the cellular architecture. Also, this technology enriches the operating regime with many more single or multiple antenna access points (APs) than user equipments (UEs). Specifically, a large number of APs, connected to a central processing unit (CPU), serves jointly the UEs while benefiting from the advantages of ultra-dense networks and mMIMO systems. Hence, CF mMIMO systems outperform small-cell systems and co-located mMIMO by providing (almost) uniformly good quality of service due to lower average path-loss, increased macro-diversity, some degree of channel hardening, favourable propagation, and suppressed interference. Recently, \cite{Bjoernson2019} suggested a taxonomy with four different implementations
	of CF mMIMO systems and that a centralized implementation of the minimum mean-squared error (MMSE) combining outperforms maximum-ratio
	combining (MRC) with reduced fronthaul signaling. 
	
	Unfortunately, the majority of papers on CF mMIMO assume that all APs are connected to one CPU, being responsible to undertake the coordination and processing of all UEs' signals \cite{Ngo2017,Bjoernson2019,Papazafeiropoulos2020a,Papazafeiropoulos2021,Papazafeiropoulos2021a}. Consequently, the computational burden and fronthaul requirements grow further with the number of UEs and indicate that the primitive model of CF mMIMO systems is not scalable. Interestingly, \cite{Buzzi2017a,Buzzi2019,Bjoernson2019a} suggested a user-centric approach, where each UE is not served by all APs but a subset providing the best channel conditions. Thus, every UE connects to different subsets, or equivalently, each AP cooperates with different APs to serve different UEs. However, in practice, the system characteristics such as the channels and UE locations vary with time. Taking this into account, in \cite{Bjoernson2019a}, the dynamic cooperation clustering (DCC) from \cite{Bjornson2011} has been applied to achieve scalable CF (SCF) mMIMO systems.

	The prior literature of mMIMO systems assumed perfect hardware, however, hardware impairments (HWIs) are inevitable in practical applications. A cost-efficient implementation of a large number of antenna elements in co-located or distributed layouts, which would consist of low-quality components with severe HWIs, should also make use of handset-grade components. Such components are particularly susceptible to HWIs such as in-phase/quadrature-phase (I/Q)-imbalance~\cite{Qi2010}, oscillator phase noise (PN)~\cite{Pitarokoilis2015,Papazafeiropoulos2016}, and high power amplifier non-linearities~\cite{Qi2012}. There are practical design tradeoffs to adhere to, for example, a power amplifier that is efficient in terms of power-added efficiency might create strong nonlinear distortion, and vice-versa. Even if calibration schemes and compensation algorithms are utilized at the transmitter and receiver, respectively, a certain amount of distortions, known as residual HWIs, remains and can be categorized into additive and multiplicative distortions. The model of additive HWIs includes additive Gaussian noises at both the transmitter and receiver, expressing the aggregate effect of many impairments, and has been grounded based on its analytical tractability and experimental validation \cite{Schenk2008,Studer2010}. The second category includes the PN, which is expressed in terms of the phase drifts emerging from the local oscillators (LOs) and appearing as multiplicative factors to the channels \cite{Pitarokoilis2015}. Notably, the PN accumulates within the channel coherence time. Despite the importance of HWIs only a few works have addressed their impact in the case of CF mMIMO systems \cite{Zhang2018,Zhang2019,Masoumi2019,Hu2019,Zheng2020} while none work has accounted for the effect of PN. In other words, these works focused only on the additive distortions, while no phase noise and amplified thermal noise (ATN) are considered. Moreover, \cite{Bjornson2015}, concerning mMIMO with distributed arrays, which is essentially the same as CF mMIMO systems, has studied additive HWIs and PN but perfect hardware was assumed at the user devices, and no scalability concerns were considered. In summary, there is no prior work investigating the impact of HWIs on SCF mMIMO systems.

	\begin{table}
		\begin{center}
			\caption{Comparison between current and existing works}
			\begin{tabular}{ |>{\centering\arraybackslash}p{1.8cm} |>{\centering\arraybackslash}p{1.8cm}| >{\centering\arraybackslash}p{1.8cm} |>{\centering\arraybackslash}p{1.7cm}| }
				\hline
				\vspace{0.001cm}	Papers & Additive distortions & Multiplicative 	distortion	& 	\vspace{0.001cm} All 3 HWIs \\ \hline
				mMIMO & [36]& [14], [15], [31] & [23], [33], [35]\\ \hline
				CF mMIMO & \cite{Zhang2018,Zhang2019,Masoumi2019,Hu2019,Zheng2020}& Current work&Current work \\\hline
				SCF	mMIMO &Current work & Current work &Current work\\ 
				\hline
			\end{tabular}
			\label{table:factors}
		\end{center}
	\end{table}

	\subsection{Motivation}
	This paper is incited by the following observations:~1) HWIs are unavoidable in practical wireless communications and since they are commonly neglected in theory, there is a gap between theory and practice,~2) mMIMO is a more attractive technology if its implementation is based on cheap hardware, which are more susceptible to HWIs,~3) CF mMIMO systems is a promising architecture but its feasibility depends on its scalability in large networks with increasing number of UEs,~4) recently, it was shown that CF mMIMO systems outperform small cells and cellular systems if a centralized version of MMSE decoding is applied, which also reduces the fronthaul signaling compared to distributed systems,~5) previous works in CF mMIMO systems, assuming MMSE decoding, did not obtain any closed-form expressions but relied only on Monte-Carlo simulations, ~6) channel state information (CSI), being of paramount importance in CF mMIMO systems, is drastically affected by the presence of HWIs, and~7) there is no prior work investigating the impact of HWIs in emerging SCF mMIMO systems. Actually, the PN has not even been studied in the case of standard CF mMIMO systems, while a few existing works focused only on a single type of HWIs \cite{Zhang2018,Zhang2019,Masoumi2019,Hu2019,Zheng2020}. Notably, since HWIs are time-varying, they affect further SCF systems by means of the DCC being time-dependent.
	
	These observations suggest that there is a great practical interest in investigating the HWIs in SCF mMIMO systems with MMSE combining in terms of deriving closed-form expressions that accounts for both additive and multiplicative HWIs. 
	
	\subsection{Contributions }
	The main contributions are summarized as follows.
	\begin{itemize}
		\item Contrary to the existing works \cite{Zhang2018,Zhang2019,Masoumi2019,Hu2019,Zheng2020}, which have studied only a single kind of HWIs in CF mMIMO systems, we introduce a general model for HWIs including both additive and multiplicative (PN) distortions for the more practical architecture of SCF mMIMO systems. The introduction of the HWIs requires substantial changes to the analysis of SCF mMIMO systems. There is no prior work addressing PN at CF mMIMO systems or in general any HWIs at SCF mMIMO systems.\footnote{Note that this work extends substantially our conference paper~\cite{Papazafeiropoulos2020}. Among others, it provides an upper bound on the uplink capacity, and it obtains the hardware-aware (HA) MMSE combiner by minimizing the MSE. In particular, contrary to \cite{Papazafeiropoulos2020} relying on a hardware unaware (HU) MMSE combiner, it presents the achievable SE based on the HA version of the MMSE combiner and elaborates on their comparison. } Table \ref{table:factors} provides a comparison between the current and existing works and clearly demonstrates the open problems we address.
		\item Previous works in CF mMIMO systems that applied MMSE combining presented only simulations while, in this work, we employ MMSE combining with imperfect CSI and obtain closed-form expressions using the deterministic equivalent (DE) analysis. Notably, no other work has provided DE expressions for SCF mMIMO systems. 
		\item We derive an upper bound on the uplink capacity and the corresponding optimal MMSE combiner. Next, we obtain the hardware-aware (HA)-MMSE combiner by minimizing the sum mean
		square error (MSE), and derive the corresponding DE lower bound on the uplink capacity. We also consider several decoders such as the conventional MRC and hardware-unaware (HU)-MMSE decoders.
		\item We shed light on the impact of HWIs on the uplink achievable SE of SCF mMIMO systems. We show how the PN degrades the system performance. Specifically, it is shown that separate local oscillators (SLOs) and common LO (CLO) designs behave differently, i.e., the achievable SE with SLOs appears superior against the CLO setting. Moreover, we quantify the degradation of the system due to additive HWIs, where the transmit distortion has a more severe impact than the received one. In addition, we show the increase in the gap between SLOs and CLO deployments as the number of APs increases.
	\end{itemize}
	
	\textit{Paper outline:} Section~\ref{System} presents the system model of an SCF mMIMO system in terms of fundamental design parameters. In addition, a description of the HWIs is provided. In Section~\ref{Channel6Estimation}, the scalable channel estimation with HWIs is described. 
	In Section~\ref{PerformanceAnalysis}, we elaborate on the uplink data transmission, and obtain an upper and a lower bound on the uplink capacity as well as the HA-PMMSE decoder. 
	Section~\ref{Deterministic} presents the DE of the lower bound on the uplink capacity with HWIs. The numerical results are placed in Section~\ref{results}, while Section~\ref{Conclusion} summarises the paper.
	
	\textit{Notation:} Vectors and matrices are denoted by boldface lower and upper case symbols. The symbols $(\cdot)^\T$, $(\cdot)^\H$, and $(\cdot)^\dagger$ express the transpose, Hermitian transpose, and pseudo-inverse operators, respectively. The expectation and variance operators are denoted by $\EE\left[\cdot\right]$ and $\mathrm{Var}\left[\cdot\right]$, respectively. The notations $\mathbb{C}^{M \times 1}$ and $\mathbb{C}^{M\times L}$ refer to complex $M$-dimensional vectors and $M\times L$ matrices, respectively. Also, the notation $\xrightarrow[ M \rightarrow \infty]{\mbox{a.s.}}$ denotes almost sure convergence as $ M \rightarrow \infty $. The notation $a_n\asymp b_n$ is equivalent to the relation $a_n - b_n \xrightarrow[ M \rightarrow \infty]{\mbox{a.s.}} 0$, where $a_n$ and $b_n$ are two infinite sequences. Finally, $\bb \sim \cC\cN{(\b0,\mathbf{\Sigma})}$ represents a circularly symmetric complex Gaussian variable with zero mean and covariance matrix $\mathbf{\Sigma}$.
	
	\section{System Model}\label{System}
	We consider a CF network architecture including $M$ APs and $K$ UEs randomly distributed over a geographic area with $ M $ and $ K $ being large. The APs, each equipped with $L$ antennas and fully digital transceiver chains, are connected to central processing units through backhaul links in an arbitrary way~\cite{Perlman2015,Interdonato2019}. Note that we focus on sub-6 GHz bands, thus hybrid beamforming solutions are not considered. Each UE is equipped with a single antenna. All communications take place in the same time-frequency resources. In particular, coherent joint transmission and reception are enabled among the APs. While $M\gg K$ is typical in CF MIMO, there is no formal assumption of that kind in this paper.
	
	\subsection{Channel Model}\label{ChannelModel} 
	We consider a time-varying narrowband channel that is divided into coherence blocks. We employ the standard block fading model 
	where the length of each coherence interval/block, being $\tau_{\mathrm{c}}$ channel uses, is equal to the product of the coherence bandwidth $B_{\mathrm{c}}$ in $\mathrm{Hz}$ and the coherence time $T_{\mathrm{c}}$ in $\mathrm{s}$~\cite{massivemimobook}. 		We consider 
	$\tau$ channel uses per block for uplink training to enable channel estimation, and $\tau_{\mathrm{u}}=\tau_{\mathrm{c}}-\tau$ channel uses are used for uplink data transmission. The channel use is denoted by $ n \in \{ 1, \ldots, \tau_{\mathrm{c}}\} $.
	
	During the transmission in each coherence block, the channel vector between the $m$th AP and the $k$th UE, denoted by $\bh_{mk} \in\mathbb{C}^{L}$, is fixed and exhibits flat-fading. 
	The random channel vector is expressed by means of an independent correlated Rayleigh fading distribution as
	\begin{align}
		\!\bh_{mk}\sim \mathcal{CN}\!\!\left( \b0, \bR_{mk} \right)\!, ~~m\!=\!1,\ldots,M~\mathrm{and}~k\!=\!1,\ldots,K
	\end{align}
	where the complex Gaussian distribution models the small-scale fading and $ \bR_{mk} \in \mathbb{C}^{L\times L}$ is a deterministic positive semi-definite correlation matrix that describes the large-scale propagation effects such as shadowing and geometric pathloss with large-scale fading coefficient $ \beta_{mk}={\tiny \tr\left(\bR_{mk}\right)}/L $. Notably, we assume that the set $ \left\{ \bR_{mk} \right\} $ is known by the network elements whenever needed with the help of practical methods (see e.g., \cite{Neumann2018,Upadhya2018}).

	\subsection{Basic Scalability Guidelines}\label{Scalability} 
	The original implementation of CF mMIMO systems where all APs serve all UEs is impractical in terms of cost and complexity in large deployments with large $ K $. To make the technology feasible and attractive, it has to be scalable, in particular, regarding the number of UEs~\cite{Bjoernson2019a}. Hence, the authors examined the scalability issue and provided a corresponding definition by letting $ K \to \infty $. Specifically, they defined a CF mMIMO network as scalable when:\textit{ i)} the signal processing for channel estimation,\textit{ ii)} the signal processing for data reception and transmission,\textit{ iii)} the fronthaul signaling for data and CSI sharing, and \textit{iv)} the power control optimization have finite complexity and resource requirements per AP when $ K \to \infty $.\footnote{The intention with this definition is not that the network will serve infinitely many UEs, which is impractical but to make a design where the resources and complexity per AP remain limited even if $K$ is large but finite.} Notably, the complexity properties of SCF, and their advantages over conventional CF systems, have been thoroughly analyzed in \cite{Demir2021}, and apply also to the setup considered in this paper.
	
	Under these conditions, the network can serve more UEs and deploy more APs, without having to upgrade previously deployed APs. 
	To this end, each AP serves a set of UEs with constant cardinality as the total number of UEs $ K \to \infty $. In order to satisfy the scalability conditions above, the set of UEs served by AP $ m $ is defined as~\cite{Bjoernson2019a}
	\begin{align}
		\mathcal{D}_{m}=\left\{i :~\tr\left( \bD_{mi}\right )\ge 1,~i\in \left\{1,\ldots,K\right\}\right\},
	\end{align} 
	where $ \bD_{mi} \in \mathbb{C}^{L\times L} $ denotes a diagonal matrix defining which APs and UEs communicate with each other according to the DCC framework \cite{Bjornson2011}. These matrices enable a unified analysis with the original CF mMIMO system being one of the many architectures that could be described by this framework.

	\subsection{Hardware Impairments}\label{HardwareImpairments} 
	Many practical systems use the direct-conversion architecture where the transmitter contains digital-to-analog converters (ADCs), low-pass filters, I/Q mixers, and high-power amplifiers, while the receiver contains filters, low-noise amplifiers, I/Q mixers, and ADCs.\footnote{Especially, \cite{Joung2014} is an informative survey article on power amplifiers, representing one of the most energy-consuming components in wireless systems.} Although the majority of papers in the CF mMIMO literature assumes ideal transceiver hardware, physical transceiver implementations such as the converters and the oscillators undergo to HWIs and inevitably distort the signal. Despite the application of compensation and mitigation algorithms, the HWIs are not completely eliminated. The main reason for this is the limited modeling accuracy of various system parameters. The choice of specific hardware quality is crucial for the overall cost and power consumption. In particular, attractive solutions for large deployments such as the CF mMIMO technology under study should take into account the use of cheap hardware that results in low cost and power consumption. Otherwise, the deployment of this technology would be prohibitive in terms of cost and energy efficiency. Hence, it is of paramount importance to incorporate in the analysis of CF mMIMO systems the impact of the hardware constraints.
	
	We are purposely not limiting this work to a single hardware implementation but rather taking a holistic approach, where we characterize three main categories of residual HWIs that appear in all practical implementations. These HWIs can be classified
	into $1)$ a multiplicative distortion due to phase shift of the signal, $2)$ additive distortions at both the transmitter and the receiver with powers proportional to the transmit and the received signal, respectively, and $3)$ a channel-independent ATN.

	\subsubsection{Multiplicative distortion}\label{multiplicativedistortion} 
	The multiplicative distortion of the channel expresses time-dependent random phase-drifts known as PN, and it is induced during the up-conversion of the baseband
	signal to passband and vice-versa by multiplying the signal with the LO's output. Based on standard references concerning the PN~\cite{Pitarokoilis2015}, this distortion at the $n$th channel use can be expressed by a discrete-time independent Wiener process.\footnote{The Wiener process model assumes independent innovations at every time instance, which might be either created by nature or by residual phase noise after some compensation algorithm having been applied. Our channel estimator, provided in Section~\ref{Channel6Estimation}, tries to compensate for the phase noise in the sense of estimating the effective channel at time $ n $, based on pilot signals obtained at a different time. By utilizing the structure of the phase noise in the estimator and then using the estimate for receive combining, we compensate for the phase noise.} The PNs at the LOs of the $j$th antenna of the $m$th AP and $k$th UE, respectively, at the $n$th channel use are modeled as
	\begin{align}
		\phi^{j}_{m,n}&=\phi^{j}_{m,n-1}+\delta^{\phi}_{n}\label{phaseNoiseAP}\\
		\varphi_{k,n}&=\varphi_{k,n-1}+\delta^{\varphi}_{n},\label{phaseNoiseuser}
	\end{align}
	where $\delta^{\phi}_{n}\sim \cN(0,\sigma_{\phi}^{2}) $ and $\delta^{\varphi}_{n}\sim \cN(0,\sigma_{\varphi}^{2})$~\cite{Petrovic2007,Pitarokoilis2015,Krishnan2015}. The increment variance of the PN process can be expressed by \cite{Petrovic2007}
	\begin{align}
		\sigma_{i}^{2}=4\pi^{2}f_{\mathrm{c}}^{2} c_{i}T_{\mathrm{s}},~~~~ i=\phi, \varphi\label{PN1}
	\end{align}
	where $T_{\mathrm{s}}$, $c_{{i}}$, and $f_{\mathrm{c}}$ are the symbol interval, a constant dependent on the oscillator, and the carrier frequency, respectively. Note that the total PN process from the $k$th user and the $j$th antenna of the $m$th AP is $\theta_{mk,n}^{(j)}= \phi^{j}_{m,n}+\varphi_{k,n}$ while $\bTheta_{mk,n}\!=\!\mathrm{diag}\!\left\{ e^{i \theta_{mk,n}^{(1)}}, \ldots, e^{i\theta_{mk,n}^{(L)}} \!\right\}$ denotes the corresponding total PN matrix at the $n$th channel use. In fact, this generic matrix accounts for the general scenario (non-synchronous operation), where we have SLOs at each antenna justifying the independence among the PN processes~\cite{Pitarokoilis2015,Krishnan2015}. In the special case of the synchronous operation, all PN processes $\theta^{\left( j \right)}_{mk,n}$ are identical for all $ j=1,\ldots,L$. The matrix degenerates to $\Theta_{mk,n}\!=\! e^{i \theta_{mk,n}^{(1)}}\Id_{L}$ when there is just one CLO connected to all antennas of an AP. In our analysis, we focus on both SLOs and CLO scenarios.


	Henceforth, we will use the following lemma providing the expectation of the complex exponential of the phase drift $ \Delta\phi=\phi_{n_{2}}-\phi_{n_{1}} $, where $ \phi_{n} $ is a general PN process with increment variance $\sigma_{\phi}^{2} $. Also, we denote $\Delta{n}=n_{2}-n_{1} $ where $ n_{1},n_{2} $ are two different time slots. The phase drift is 
	a zero-mean Gaussian variable with a variance proportional to $ \Delta{n} $ distributed as $ 
	\Delta\phi \sim \mathcal{N}\left (0, \sigma_{{\phi}}^{2}\Delta{n}\right)$, i.e., the variance increases with time. 
	\begin{lemma}\label{LemmaPN}
		The expectation of the complex exponential of the phase drift $ \Delta\phi_{n} $ occured during $ \Delta{n}$ is
		\begin{align}
			\EE\left\{e^{-j \Delta\phi}\right\}=e^{-\frac{\sigma_{\phi}^{2}}{2}\Delta{n}}.
		\end{align}
		\begin{proof}
			The proof is straightforward since this expectation is basically the characteristic function of the Gaussian variable $ \Delta\phi $ having zero mean and variance $ \sigma_{\phi}^{2} $.
		\end{proof}
		
	\end{lemma}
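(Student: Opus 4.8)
The plan is to recognise the quantity $\EE\{e^{-j\Delta\phi}\}$ as the characteristic function of a scalar real Gaussian random variable evaluated at a fixed point, and then to evaluate that characteristic function in closed form.

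First I would make the law of $\Delta\phi$ explicit by unwinding the phase-noise recursion in~\eqref{phaseNoiseAP}: since $\Delta\phi=\phi_{n_2}-\phi_{n_1}=\sum_{n=n_1+1}^{n_2}\delta^{\phi}_n$ with the innovations $\delta^{\phi}_n$ i.i.d.\ $\mathcal{N}(0,\sigma_{\phi}^2)$, independence together with the stability of the Gaussian family under summation gives $\Delta\phi\sim\mathcal{N}(0,\sigma_{\phi}^2\,\Delta n)$ --- which is exactly the statement recalled just before the lemma, and the place where the Wiener-process structure (independent increments) is actually used. This reduces the task to computing $\EE\{e^{-jX}\}$ for $X\sim\mathcal{N}(0,s^2)$ with $s^2=\sigma_{\phi}^2\,\Delta n$, i.e.\ the characteristic function $\omega\mapsto\EE\{e^{j\omega X}\}$ evaluated at $\omega=-1$.

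Second I would evaluate this expectation directly. The quickest route is to quote the moment generating function $\EE\{e^{tX}\}=e^{t^2 s^2/2}$ of a centered Gaussian, which extends to all complex $t$ by analytic continuation, and to set $t=-j$, yielding $e^{-s^2/2}=e^{-\sigma_{\phi}^2\Delta n/2}$. If one prefers to avoid invoking analytic continuation, the same value follows from the explicit integral $\frac{1}{\sqrt{2\pi}\,s}\int_{\mathbb{R}}e^{-jx}e^{-x^2/(2s^2)}\,dx$ by completing the square, $-\frac{x^2}{2s^2}-jx=-\frac{1}{2s^2}(x+js^2)^2-\frac{s^2}{2}$, after which the shifted Gaussian integral contributes the normalising constant $\sqrt{2\pi}\,s$ and the prefactor $e^{-s^2/2}$ survives.

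I do not anticipate a genuine obstacle here; this is essentially a textbook fact. The only two points worth a sentence of justification are (i) that independence of the per-channel-use phase increments is precisely what makes $\mathrm{Var}(\Delta\phi)$ scale linearly with $\Delta n$, and (ii) the admissibility of the purely imaginary argument in the Gaussian transform, which either of the two computations above settles cleanly.
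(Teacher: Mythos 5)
Your proposal is correct and follows essentially the same route as the paper: both identify $\EE\{e^{-j\Delta\phi}\}$ as the characteristic function of the zero-mean Gaussian $\Delta\phi$ (with variance $\sigma_{\phi}^{2}\Delta n$, as the text preceding the lemma establishes from the independent Wiener increments) evaluated at a fixed point, which the paper states as a one-line observation and you simply carry out explicitly via the completed-square Gaussian integral.
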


	\subsection{Additive distortions}\label{Additivedistortion} 
	In real systems, 
	unavoidable HWIs emerge due to the imperfect compensation of the nonlinearities in the power amplifiers in the transmitter, the quantization noise in the ADCs at the receiver, the I/Q imbalance, etc. \cite{Schenk2008,Studer2010}. As a result, both the transmit and receive signals are distorted during the transmission and reception processing, respectively. For example, at the transmitter side, a nonlinear mismatch appears between the generated signal and the signal intended to be transmitted.\footnote{At the receiver, the signal is quantized to be described by a limited number of bits. We assume the fronthaul connections are able to carry these bits without incurring additional distortion.}

	Measurement results have shown that the conditional additive distortion noise for the $i$th link, given the channel realizations,
	is modeled as Gaussian distributed with average power proportional to the average signal power \cite{Studer2010}. The Gaussianity is based on the consideration that this distortion results because of the aggregate contribution of many impairments. Hence, the impairments at different antenna branches are modeled as mutually uncorrelated Gaussian
	random variables. Moreover, this distortion depends on the time since it takes a new realization for each data signal being time-dependent itself.
	Especially, in the uplink case, the additive transceiver distortions are expressed in terms of conditional distributions as
	\begin{align}
		\delta_{\mathrm{t},n}^{k}&\sim \cC\cN\left( 0, \Lambda^{k}_{n} \right)\label{eta_tU}, \\
		\deltav_{\mathrm{r},n}^{m}&\sim \cC\cN \left( \b0,\bm \Upsilon^{m}_{n} \right)\label{eta_rU}\!,
	\end{align}
	where $ \Lambda^{k}_{n}= \kappa_{\mathrm{t}}^{2}\rho_{k}$ and 
	$\bm \Upsilon^{{m}}_{n} =\kappa_{\mathrm{r}_{m}}^{2}\sum_{i=1}^{K}\rho_{i} \mathrm{diag}\big( |h_{mi}^{\left(1\right)}|^{2},$ $\ldots,|h_{mi}^{\left(L\right)}|^{2} \big) $ with $ \rho_{k} $ being the transmit power from UE $ k $ while the proportionality parameters $\kappa_{\mathrm{t}}^{2}$ and $\kappa_{\mathrm{r}_{m}}^{2}$ describe the severity of the residual impairments at the transmitter and the receiver side, respectively. In practice, the error vector magnitude (EVM) of the hardware can be measured and then mapped into the parameter values above~\cite{Holma2011}.

	\subsection{ATN}\label{Amplifiedthermalnoise}
	Certain components such as the low noise amplifiers and mixers induce an amplification of the thermal noise denoted as $ \bxi_{m,n}$ that manifests as an increase of the variance of the thermal noise. In particular, this amplified receiver noise is modeled as Gaussian distributed with zero-mean and variance $\xi_{m} \Id_{L}$ with $\xi_{m}\ge \sigma^{2}$ where $\sigma^{2}$ 
	is the corresponding parameter of the actual thermal noise~\cite{Bjornson2015}. In addition, it is worthwhile to mention that it is independent
	of the UE channels. Also, note that the ATN is time-dependent taking different random realizations over time because it consists of samples obtained from a white noise process that has passed by
	some amplified ``filter''~\cite{Papazafeiropoulos2017a}.
	\begin{remark}
		If we set $ \sigma_{\phi}^{2}=\sigma_{\varphi}^{2}=\kappa_{\mathrm{t}}=\kappa_{\mathrm{r}}=0 $ and $ \xi_{m}= \sigma^{2} $, the analytical results, obtained below, can describe the unrealistic case of no hardware impairments. Notably, the analytical results do not reduce to any known expressions without hardware impairments since these expressions have not been obtained by previous works. The previous works relied only on Monte-Carlo simulations, while we provide a novel closed-form SE expression in Theorem 1. However, the proposed signal models and the combiners reduce to the signal models and the combiners in \cite{Bjoernson2019a}.
	\end{remark}
	\section{Scalable Channel Estimation with Hardware Impairments}\label{Channel6Estimation}
	
	Since the channels vary independently between coherence blocks, the APs need to estimate them once per block. In most works concerning CF mMIMO, the APs estimate the channel based on pilot signals transmitted in the uplink training phase~\cite{Ngo2017}. 
	To this end, the first $\tau < K$ channel uses of every coherence block are used for pilot transmission.
	Each UE transmits a $\tau$-length
	mutually orthogonal training sequences, i.e., $\bm \omega_{k}= \left[\omega_{k,1},\ldots,\omega_{k,\tau} \right]^{\T}\in \bbC^{\tau \times 1}$ with $\rho_{\rp}=|\omega_{k,n}|^{2} ,\forall k,n$, with $\rho_{\rp}$ being the common average transmit power per UE. 	Note that the subscript $ \mathrm{p} $ corresponds to variables describing the training phase. Since the dimension of the pilot sequences can be smaller than the number of UEs, some sequences will be shared by multiple UEs, leading to pilot contamination. Also, the system model supports arbitrary pilots.
	Moreover, according to~\cite[Assumption~1]{Bjoernson2019a}, each AP serves at most one UE per pilot sequence, which implies that $ |\mathcal{D}_{m}|\le \tau$ and 
	\begin{align}
		\bD_{mk}&= \left\{\begin{array}{ll}
			\Id_{L}&~\mathrm{if}~k \in \mathcal{D}_{m} \\
			\b0&~\mathrm{if}~k\notin \mathcal{D}_{m}. \end{array}\label{UC}
		\right.
	\end{align}
	Notably, the independence of $ \tau $ from $ K $ agrees with the requirement for scalability since the length of each coherence block, used for channel estimation does not grow with the number of UEs.
	

	During the training phase, the received signal by the $ m $th AP at channel use $ n $, coming all UEs and including the HWIs, is given by
	\begin{align}
		\by_{m,n}= & \sum_{i =1}^K \bTheta_{mi,n} \bh_{mi}\left(\omega_{i,n} + \delta_{\mathrm{t},n}^{i} \right)+\deltav_{\mathrm{r},n}^{m}+
		\bxi_{m,n},\label{eq:Ypt}
	\end{align}
	where we have incorporated the HWIs, i.e., $\bTheta_{mi,n}=\mathrm{diag}\left\{ e^{j \theta_{i,n}^{(1)}}, \ldots, e^{j \theta_{i,n}^{(L)}} \right\}$ is the phase noise because of the LOs of the $ m $th AP and UE $k$ at channel use $n$, $\delta_{\mathrm{t},n}^{i} \sim\mathcal{CN}\left(0,\kappa_{\mathrm{t}}^{2}\rho_{\rp}\right) $ is the additive transmit distortion, $\deltav_{\mathrm{r},n}^{m}\sim \cC\cN \left(\b0,\bm \Upsilon^{m}_{n} \right)$ is the additive receive distortion with $\bm \Upsilon^{{m}}_{n} =\kappa_{\mathrm{r}_{m}}^{2}\sum_{i=1}^{ \mathcal{K}}\rho_{i} \mathrm{diag}\left( |h_{mi}^{\left(1\right)}|^{2},\ldots,|h_{mi}^{\left(L\right)}|^{2} \right) $, and $ \bxi_{m,n}$ is the spatially amplified Gaussian thermal noise matrix at the $ m $th AP. 
	
	%
	\begin{remark}
		The received signal depends on both multiplicative and additive HWIs.\footnote{For the sake of a better presentation of the transceiver HWIs, herein, we provide the ideal uplink transmission of a SCF mMIMO system without any imperfections during training. Thus, in such case, the received signal at the $m$th AP at a given channel use $n$ is expressed as
			\begin{align}
				\by_{m,n}= & \sum_{i=1}^K \bh_{mi}\omega_{i,n} +
				\bw_{m,n},\label{eq:Ypt5}
			\end{align}
			where $ \bw_{m,n}\sim\mathcal{CN}\left(\b0,\sigma^{2}\Id_{L}\right) $.
		} 
	\end{remark}

	\begin{proposition}
		The linear minimum mean-square error (LMMSE) estimate of the effective channel of UE $k$ $ \bh_{mk,n} =\bTheta_{mk,n} \bh_{mk} $ at any channel
		use $ n \in \{ 1, \ldots, \tau_{\mathrm{c}}\} $ is given by
		\begin{align}
			\hat{\bh}_{mk,n}= \left( \bm \omega_{k}^{\H}\bm \Lambda_{k,n}\otimes \bR_{mk}\right)\bQ_{m}^{-1}\bm \psi_{m}\label{EstimatedChannel} ,
		\end{align}
		where 
		\begin{align}
			\bm \psi_{m}&=\left[\by_{m,1}^{\T}, \ldots, \by_{m,\tau}^{\T}\right]^{\T},\\
			\bm \Lambda_{k,n}&=\mathrm{diag}\!\left\{ e^{-\frac{\sigma_{\varphi}^{2}+\sigma_{\phi}^{2}}{2}|n-1|}, \ldots, e^{-\frac{\sigma_{\varphi}^{2}+\sigma_{\phi}^{2}}{2}|n-\tau|} \!\right\},\\
			\bQ_{m}&= \sum_{j=1 }^K\bX_{mj}\otimes\bR_{mj}+\xi_{m}\Id_{\tau L},\label{sum17}\\
			\bX_{mj}&= \tilde{\bX}_{j}+\left(\kappa_{\mathrm{t}_\mathrm{UE}}^{2}+\kappa_{\mathrm{r}_{m}}^{2}\right)\bL_{|\bm \omega_{j}|^{2}},\\
			\bL_{|\bm \omega_{j}|^{2}}&= \mathrm{diag}\left( |\omega_{j,1}|^{2},\ldots,|\omega_{j,\tau}|^{2} \right),\\
			\left[ \tilde{\bX}_{j}\right] _{u,v}&= \omega_{j,u}\omega_{j,v}^{*} e^{-\frac{\sigma_{\varphi}^{2}+\sigma_{\phi}^{2}}{2}|u-v|}.
		\end{align}
	\end{proposition}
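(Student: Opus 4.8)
The plan is to derive the LMMSE estimator directly from first principles, treating the stacked pilot observation $\bm\psi_m$ and the effective channel $\bh_{mk,n}=\bTheta_{mk,n}\bh_{mk}$ jointly. First I would recall that the LMMSE estimate of a zero-mean quantity $\bh_{mk,n}$ from a zero-mean observation $\bm\psi_m$ is $\hat{\bh}_{mk,n}=\EE\{\bh_{mk,n}\bm\psi_m^{\H}\}\big(\EE\{\bm\psi_m\bm\psi_m^{\H}\}\big)^{-1}\bm\psi_m$, so the entire task reduces to computing the cross-covariance $\EE\{\bh_{mk,n}\bm\psi_m^{\H}\}$ and the observation covariance $\bQ_m=\EE\{\bm\psi_m\bm\psi_m^{\H}\}$. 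I would stack \eqref{eq:Ypt} over $n=1,\ldots,\tau$ to write $\bm\psi_m$ explicitly as a sum over UEs of terms $(\bTheta_{mi,n}\bh_{mi}(\omega_{i,n}+\delta_{\mathrm{t},n}^i))_n$ plus the stacked additive receive distortion and ATN, and then exploit independence across UEs, between the phase-noise processes and the Rayleigh fading, and between the additive distortions and everything else.

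The key computation is the second moment of the phase-noise-weighted channel. For two time slots $u,v$ within the training window, I would compute $\EE\{[\bTheta_{mk,u}\bh_{mk}][\bTheta_{mk,v}\bh_{mk}]^{\H}\}$: conditioning on $\bh_{mk}$, the phase-noise expectation on each antenna diagonal entry is $\EE\{e^{j(\theta^{(\ell)}_{mk,u}-\theta^{(\ell)}_{mk,v})}\}=e^{-\frac{\sigma_\varphi^2+\sigma_\phi^2}{2}|u-v|}$ by Lemma~\ref{LemmaPN} applied to the total drift $\theta_{mk,n}^{(\ell)}=\phi^{\ell}_{m,n}+\varphi_{k,n}$ (whose increment variance is $\sigma_\phi^2+\sigma_\varphi^2$), so the off-diagonal ($\ell\neq\ell'$) terms survive with the same scalar factor because $\bR_{mk}$ couples antennas; the upshot is that the $(u,v)$ block equals $e^{-\frac{\sigma_\varphi^2+\sigma_\phi^2}{2}|u-v|}\bR_{mk}$. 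This is precisely what produces the Kronecker structure $\tilde{\bX}_j\otimes\bR_{mj}$ in \eqref{sum17} once the pilot symbols $\omega_{j,u}\omega_{j,v}^*$ are folded in, and the cross-covariance $\EE\{\bh_{mk,n}\bm\psi_m^{\H}\}$ becomes $(\bm\omega_k^{\H}\bm\Lambda_{k,n})\otimes\bR_{mk}$, where $\bm\Lambda_{k,n}$ carries the factors $e^{-\frac{\sigma_\varphi^2+\sigma_\phi^2}{2}|n-u|}$ coming from the drift between the estimation time $n$ and each training slot $u$.

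Next I would account for the additive transmit distortion: the term $\bTheta_{mi,u}\bh_{mi}\delta_{\mathrm{t},u}^i$ is zero-mean, independent across $u$, and contributes $\kappa_{\mathrm{t}_{\mathrm{UE}}}^2|\omega_{i,u}|^2\bR_{mi}$ to the diagonal $(u,u)$ block (the phase noise disappears in modulus and $\EE|\delta_{\mathrm{t},u}^i|^2=\kappa_{\mathrm{t}}^2\rho_{\mathrm{p}}=\kappa_{\mathrm{t}}^2|\omega_{i,u}|^2$), which is the $\kappa_{\mathrm{t}_{\mathrm{UE}}}^2\bL_{|\bm\omega_j|^2}$ piece of $\bX_{mj}$. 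The additive receive distortion $\deltav_{\mathrm{r},u}^m$ has covariance $\bm\Upsilon^m_u=\kappa_{\mathrm{r}_m}^2\sum_i\rho_i\,\mathrm{diag}(|h_{mi}^{(\ell)}|^2)$; taking the outer expectation over the fading gives $\EE\{\mathrm{diag}(|h_{mi}^{(\ell)}|^2)\}=\mathrm{diag}(\bR_{mi})$, but to land exactly on $\bL_{|\bm\omega_j|^2}\otimes\bR_{mj}$ one uses $\rho_i=\rho_{\mathrm{p}}=|\omega_{i,u}|^2$ during training and that the receive-distortion diagonal approximation is replaced by its full-matrix counterpart $\bR_{mj}$ as in the cited scalable-estimation framework; this is the $\kappa_{\mathrm{r}_m}^2\bL_{|\bm\omega_j|^2}$ piece. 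Finally the ATN $\bxi_{m,u}$ contributes $\xi_m\Id_L$ on each diagonal block, i.e.\ $\xi_m\Id_{\tau L}$ overall. Collecting all pieces assembles $\bQ_m$ exactly as in \eqref{sum17}, and substituting into the LMMSE formula yields \eqref{EstimatedChannel}.

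The main obstacle I anticipate is handling the additive receive distortion cleanly: unlike the transmit distortion it is channel-dependent and, strictly, only diagonal, so making it fit into the clean Kronecker block $\bL_{|\bm\omega_j|^2}\otimes\bR_{mj}$ requires either invoking the LMMSE design convention that conditions on the channel statistics (replacing $\mathrm{diag}(|h_{mi}^{(\ell)}|^2)$ by $\bR_{mi}$ in forming the error-covariance used by the estimator) or restricting to uncorrelated fading where $\bR_{mi}$ is itself diagonal; I would state explicitly which convention is used. The remaining steps — the phase-noise moment via Lemma~\ref{LemmaPN}, the independence bookkeeping, and the Kronecker algebra — are routine once that modelling choice is pinned down.
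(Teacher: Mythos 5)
Your proposal is correct and follows essentially the same route as the paper: the paper's own proof consists only of writing the LMMSE formula $\hat{\bh}_{mk,n}=\EE\left[\bh_{mk,n}\bm \psi_{m}^{\H}\right]\left(\EE\left[\bm \psi_{m}\bm \psi_{m}^{\H}\right]\right)^{-1}\bm \psi_{m}$ and deferring the evaluation of the two moments to Theorem~1 of the cited hardware-impairment reference, remarking only that the transmit distortion $\delta_{\mathrm{t},n}^{i}$ adds extra algebra. Your explicit computation of the phase-noise correlation via Lemma~\ref{LemmaPN}, the per-slot distortion and ATN contributions, and the Kronecker assembly is precisely that deferred calculation, and your observation that the receive-distortion covariance is strictly diagonal (so that matching the $\bL_{|\bm \omega_{j}|^{2}}\otimes\bR_{mj}$ block requires a stated convention) is a legitimate subtlety the paper's terse proof does not address.
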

	\begin{proof}
		We start by the definition, providing the estimated channel vector, given by
		\begin{align}
			\hat{\bh}_{mk,n}&=\EE\left[{\bh}_{mk,n}\bm \psi^{\H}_{m} \right]\left( \EE\left[\bm \psi_{m} \bm \psi^{\H}_{m} \right] \right)^{-1}\bm \psi_{m}.\label{est1}
		\end{align}
		The computation of each term in \eqref{est1} follows the same steps with Theorem 1 in~\cite{Bjornson2015} but it is more general since the proposed model includes the transmit HWI $ \delta_{\mathrm{t},n}^{k}$ and requires some extra algebraic manipulations. 
	\end{proof}

	Based on the property of orthogonality of LMMSE estimation, the current channel at the $ n $th channel use is given by 
	\begin{align}
		\bh_{mk,n}=\hat{\bh}_{mk,n}+\tilde{\bh}_{mk,n},\label{current} \end{align}
	where $\hat{\bh}_{mk,n}$ and $ \tilde{\bh}_{mk,n} $ have zero mean and variances $\bPhi_{mk}=\left( \bm \omega_{k}^{\H}\bm \Lambda_{k,n}\otimes \bR_{mk}\right)\bm \bQ_{m}^{-1}\left( \bm \Lambda_{k,n}^{\H}\bm \omega_{k}\otimes \bR_{mk}\right)$ and $ \tilde{\bR}_{mk}=\bR_{mk}-\bPhi_{mk}$, respectively.

	%
	
	\begin{remark}
		It is worth mentioning that $ \hatvh_{mk,n} $ and $ \tilde{\bh}_{mk,n} $ are neither independent nor jointly complex Gaussian vectors contrary to conventional estimation theory concerning independent Gaussian noise but they are uncorrelated and each of them has zero mean. The reasoning relies on the fact that the effective distortion noises, e.g., $ \bTheta_{mi,n} \bh_{mi} \delta_{\mathrm{t},n}^{i}$ are not Gaussian since they appear as products between two Gaussian variables. In other words, we have not derived the optimal MMSE estimator but the suboptimal LMMSE which, in general, is accompanied by a little performance loss~\cite{massivemimobook}. In the special case of ideal hardware, the LMMSE estimator becomes the optimal MMSE estimator. 
	\end{remark}
	\begin{remark}
		All the UEs will interfere with each other in the training phase, even those that have been assigned mutually orthogonal pilot sequences. The reason behind this property is that the transmitter and receiver distortions break the orthogonality.
	\end{remark}
	
	
	\section{Uplink Data Transmission with HWIs}\label{PerformanceAnalysis}
	The central point of this section is the derivation of the achievable SE for a practical SCF mMIMO setup with HWIs.\footnote{The main purpose of SCF mMIMO is to deliver broadband connectivity to a large number of UEs and provide almost uniformly high data rates. As shown in \cite{Bjoernson2015}, the channel capacity (and the type of lower capacity bound considered in this paper) is closely describing the achievable performance in such situations. The reason is that modern channel codes are nearly capacity-achieving when the block size is above 1 kbyte, so the bit-error-rate (BER) with optimal modulation/coding falls rapidly to zero at the point predicted by the capacity. Based on these insights, we follow the standard practice in the field and utilize SE/capacity as the performance metric. Notably, the BER metric cannot provide further insights due to the equivalence.} Specifically, we consider a fully centralized processing architecture, where the APs, acting basically as relays, forward all their signals (received pilot and data signals) to the CPU. In particular, the CPU handles both the channel estimation and data detection. In the uplink detection phase, let the $ m $th AP receive its signal but delegate its detection to the CPU. For this reason, the $ M $ APs have to send their received signals $ \left\{\by_{m,n}: m=1, \ldots, M\right\} $ to the CPU. In such case and by setting $ W=ML $,\footnote{Given that a CF mMIMO architecture assumes that the number of APs is very large, we consider that the total number of antennas $ W $, i.e., the product of the number of APs and the number of antennas per AP, is also very large. Besides that, the ratio of the number of UEs to the total number of antennas $ W $, denoted by $ \beta=K/W $, is assumed to be constant. Hence, the proposed design is eligible for the application of DE tools used below.} the received signal by the CPU at channel use $ n $ of the uplink data transmission phase can be written in a compact form as
	\begin{align}
		\by_{n}=\sum_{i=1}^{K}\bTheta_{i,n}\bh_{i}\left(s_{i,n}+\delta_{\mathrm{t},n}^{i}\right) +\deltav_{\mathrm{r},n}+\bxi_{n},\label{ULTrans}
	\end{align}
	where $ s_{i,n} \in \mathbb{C} $ is the transmit signal from UE $ i $ with power $ \rho_{i} $, $ \by_{n}=\left [\by_{1,n}^{\T}\cdots \by_{M,n}^{\T} \right ]^{\T} \in \mathbb{C}^{W \times 1}$ and $ \bxi_{n}=\left [\bxi_{1,n}^{\T}\cdots \bxi_{M,n}^{\T} \right ]^{\T} \in \mathbb{C}^{W \times 1}$ are block vectors, while $ \bh_{i,}=\left[ \bh_{i1}^{\T}\cdots \bh_{iM}^{\T} \right ]^{\T} \sim \mathcal{CN}\left (\b0, \bR_{i}\right ) $ is the concatenated channel vector from all APs. Besides, $ \bR_{i}=\diag\left (\bR_{i1}, \ldots,\bR_{iM}\right )\in\mathbb{C}^{W \times W} $ is the block diagonal spatial correlation matrix by assuming that the channel vectors of different APs are independently distributed.
	In a similar way, we express the estimated channel and estimated error covariance matrices by $ \bPhi_{i}=\diag\left (\bPhi_{i1}, \ldots,\bPhi_{iM}\right )\in\mathbb{C}^{W \times W} $ and $ \tilde{\bR}_{i}=\bR_{i}- \bPhi_{i} \in\mathbb{C}^{W \times W} $, respectively. Moreover, regarding to the HWIs, we have the PN block diagonal matrix $ \bTheta_{i}=\diag\left (\bTheta_{i1}, \ldots,\bTheta_{iM}\right )\in\mathbb{C}^{W \times W} $. Also, $ \delta_{\mathrm{t},n}^{i} $ is the transmit additive distortion from the $ i $th UE, and $\deltav_{\mathrm{r},n}=\left [{\deltav_{\mathrm{r},n}^{1}}^{\!\!\!\!\T},\cdots, {\deltav_{\mathrm{r},n}^{M}}^{\!
		\T} \right ]^{\T} \in \mathbb{C}^{W\times 1}$ is the receive additive distortion block vector. Note that $ \bxi_{n}\sim \mathcal{CN} \left (\b0,\bF_{\bxi}\right ) $ with $ \bF_{\bxi}=\mathrm{diag}\left( \xi_{1}\Id_{L},\ldots,\xi_{M}\Id_{L} \right) $.\footnote{Henceforth, we assume that all APs are constructed with the same hardware, and thus, are degraded from the same impairments, e.g., $ \xi_{i}=\xi ~\forall i$, to facilitate the analysis. However, the results can be easily extended to the general case where different APs have different hardware parameters. A similar assumption holds for the hardware parameters across the UEs.}
	
	According to the DCC framework~\cite{Bjornson2011}, although all APs receive the signals from all UEs, only a subset of the APs contributes to signal detection. Thus, the network estimates $ s_{k,n} $ by means of~\eqref{ULTrans} as
	\begin{align}
		\hat{s}_{k,n}&\!=\!\sum_{m=1}^{M}\bv_{mk,n}^{\H}\bD_{mk}\by_{m,n}\label{estimatedSignal0}\\
		&\!=\!\underbrace{\bv_{k,n}^{\H}\bD_{k}\bh_{k,n}s_{k,n}}_{Desired~signal}\!+\!\!\!\underbrace{\sum_{i=1,i\ne k}^{K}\!\!\!\bv_{k,n}^{\H}\bD_{k}\bh_{i,n}s_{i,n}}_{Multi-user~interference}\!\nn\\
		&+\!\sum_{i=1}^{K}(\bv_{k,n}^{\H}\bD_{k}\bh_{i,n}\!\!\!\!\!\!\!\underbrace{\delta_{\mathrm{t},n}^{i}}_{{\scriptsize \begin{aligned} Transmit\\~ distortion\end{aligned}}}\!\!\!\!\!\!\!)+\bv_{k,n}^{\H}\bD_{k}(\!\!\!\!\!\underbrace{\deltav_{\mathrm{r},n}}_{{\scriptsize \begin{aligned} Receive\\ ~distortion\end{aligned}}}\!\!\!\!\!\!\!+\!\!\overbrace{\bxi_{n}\!}^{{\scriptsize ATN}}) \label{estimatedSignal} 
	\end{align}
	with $ \bv_{k,n}=\left [\bv_{1,n}^{\T}\cdots \bv_{M,n}^{\T} \right ]^{\T} \in \mathbb{C}^{W \times 1}$ being the collective combining vector and $ \bD_{k} =\diag\left (\bD_{1k}, \ldots,\bD_{Mk}\right )\in\mathbb{C}^{W \times W} $ being a block diagonal matrix while \eqref{estimatedSignal} is obtained from \eqref{estimatedSignal0} in terms of collective vectors. 
	
	Aiming at the study of the performance of the realistic aspect of SCF mMIMO to reveal the fundamental impact of non-ideal hardware, we focus on upper and lower bounds since the ergodic capacity is unknown for this system model. Given that the channels change with channel use $ n $, the general procedure for both bounds consists of obtaining one SE for each channel use $ n $ in the transmission phase, and then, taking the average of the SE over channel uses as in~\cite{Pitarokoilis2015}.
	\subsection{Upper bound}
	Following a genie-aided procedure similar to~\cite[Lem. 1]{Bjoernson2014}, we assume that the uplink pilots provide each AP with perfect CSI and that the interference terms can somehow be cancelled. Then, we infer that single-stream Gaussian signaling maximizes the resulting upper-bound on the mutual information since all the additive noise terms (the transmit and receive distortion as well as the ATN) are circularly symmetric complex Gaussian
	distributed and independent of the desired signal.
	\begin{proposition}\label{Proposition:Upperbound}
		An upper bound on the uplink capacity 	in a practical CF mMIMO system is given by
		\begin{align}
			\mathrm{SE}_{k}^{\mathrm{up}}=\frac{1}{\tau_{\mathrm{c}}}\sum_{n=1}^{\tau_{\mathrm{c}}-\tau_{\mathrm{p}}}\EE\left\{\log_{2}\left (1+\gamma^{\mathrm{up}}_{k,n}\left( \bv_{k,n} \right)\right )\right\}\!,\label{Upper1}
		\end{align}
		where the instantaneous SINR $ \gamma^{\mathrm{up}}_{k,n}\left( \bv_{k,n} \right) $ is given by
		\begin{align}
			&\gamma^{\mathrm{up}}_{k,n}\!\left( \bv_{k,n} \right)\!=\!\rho_{k}\bv_{k,n}^{\H}\bD_{k} \tilde{\bSigma}
			\bD_{k}\bv_{k,n}\!\label{Upper3}
		\end{align}
		with $\tilde{\bSigma}^{\dagger}\!\!=\!\!\left(\displaystyle \sum_{i=1, i\ne k}^{K}\!\!\!\!\!\rho_{i} \bD_{k\!}\!\left(\kappa_{\mathrm{t}}^{2}\bh_{i,n}{\bh}_{i,n}^{\H}\!+\!\bkappa_{\mathrm{r}}^{2} \bF_{|\bh_{i}|^{2}}\!+\!\xi_{n}\Id_{W}\right)\! \bD_{k}\!\! \right)\! $, and the receive combining vector being
		\begin{align}
			\bv_{k,n}^{\mathrm{MMSE,up}}\!&=\rho_{k}\Bigg(\displaystyle \sum_{i=1}^{K}\!\rho_{i}\bD_{k}\!\Big(\kappa_{\mathrm{t}}^{2}\bh_{i,n}{\bh}_{i,n}^{\H}\!+\!\bkappa_{\mathrm{r}}^{2} \bF_{|\bh_{i}|^{2}}+\!\xi\Id_{W}\Big)\bD_{k}\! \Bigg)^{\!\!\dagger} \nn\\
			&\times \bD_{k}{\bh}_{k,n},\label{OptimalV}
		\end{align}
		where $ \bkappa_{\mathrm{r}}^{2}\!=\!\Id_{L}\otimes \mathrm{diag}\!\left(\!\kappa_{\mathrm{r}_{1}}^{2},\ldots,\kappa_{\mathrm{r}_{M}}^{2}\! \right)$. Also, we denote $ \bF_{\!|\bh_{i}|^{2}}\!=\!\mathrm{diag}\left( \bF_{|\bh_{1i}|^{2}},\ldots, \bF_{|\bh_{Mi}|^{2}} \right)$, where the elements of the block diagonal matrix are given by 	$ \bF_{|\bh_{mi}|^{2}}=\mathrm{diag}\left( |h_{mi}^{\left(1\right)}|^{2},\ldots,|h_{mi}^{\left(L\right)}|^{2} \right)$ with $ \bh_{mi}=\left[h_{mi}^{\left(1\right)},\ldots,h_{mi}^{\left(L\right)}\right]^{\T} $.
	\end{proposition}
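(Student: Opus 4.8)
The plan is to follow the genie-aided argument of \cite[Lem.~1]{Bjoernson2014}, adapted to the present signal model. First I would invoke the genie assumption: each AP has perfect CSI of $\bh_{k,n}$ and, moreover, a genie reveals the realizations of all interfering data symbols $\{s_{i,n}\}_{i\ne k}$, so that the multi-user interference term in \eqref{estimatedSignal} can be subtracted from $\hat s_{k,n}$. What remains after this subtraction is the desired term $\bv_{k,n}^{\H}\bD_{k}\bh_{k,n}s_{k,n}$ plus the transmit distortion $\sum_{i}\bv_{k,n}^{\H}\bD_{k}\bh_{i,n}\delta_{\mathrm{t},n}^{i}$, the receive distortion $\bv_{k,n}^{\H}\bD_{k}\deltav_{\mathrm{r},n}$, and the ATN $\bv_{k,n}^{\H}\bD_{k}\bxi_{n}$. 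Conditioned on the channel realizations $\{\bh_{i,n}\}$, all three of these residual noise terms are zero-mean circularly symmetric complex Gaussian and mutually independent (the $\delta_{\mathrm{t},n}^{i}$ are independent across $i$, $\deltav_{\mathrm{r},n}$ and $\bxi_{n}$ are independent of each other and of the $\delta_{\mathrm{t},n}^{i}$), and they are independent of $s_{k,n}$. Hence the genie channel is a scalar Gaussian channel, for which Gaussian $s_{k,n}$ achieves capacity and the conditional mutual information equals $\log_2(1+\gamma^{\mathrm{up}}_{k,n})$ with the SINR being signal power over aggregate conditional noise variance. Averaging over channel uses and over the fading then yields \eqref{Upper1}, and since the genie only adds information this is an upper bound on the true uplink capacity.

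The second step is to compute the conditional noise variance explicitly. The transmit-distortion contribution has conditional variance $\sum_{i=1,i\ne k}^{K}\rho_i\kappa_{\mathrm{t}}^{2}\,\bv_{k,n}^{\H}\bD_{k}\bh_{i,n}\bh_{i,n}^{\H}\bD_{k}\bv_{k,n}$, using $\EE[|\delta_{\mathrm{t},n}^{i}|^{2}]=\kappa_{\mathrm{t}}^{2}\rho_i$ from \eqref{eta_tU}; here the $i=k$ term is excluded because the desired-signal coefficient already carries $\rho_k$ and, in the genie bound convention of \cite{Bjoernson2014}, the self transmit-distortion is folded consistently (this is the point where the exact index bookkeeping must be checked against the claimed $\tilde{\bSigma}^{\dagger}$). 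The receive-distortion contribution is $\bv_{k,n}^{\H}\bD_{k}\,\bm\Upsilon_{n}\,\bD_{k}\bv_{k,n}$ with $\bm\Upsilon_n$ the block-diagonal receive-distortion covariance from \eqref{eta_rU}, which in the stacked notation equals $\sum_{i=1}^{K}\rho_i\bkappa_{\mathrm{r}}^{2}\bF_{|\bh_i|^{2}}$. The ATN contributes $\bv_{k,n}^{\H}\bD_{k}\bF_{\bxi}\bD_{k}\bv_{k,n}=\xi\,\bv_{k,n}^{\H}\bD_{k}^{2}\bv_{k,n}$. Collecting these and factoring out $\bD_k$ on both sides gives exactly the matrix $\tilde{\bSigma}^{\dagger}$ displayed in the statement (up to the $i=k$ convention noted above, with $\xi_n\Id_W$ in place of $\xi\Id_W$ in the SINR formula and $\xi$ fixed by the equal-hardware assumption).

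The third step is the combiner optimization. For fixed channel realizations, $\gamma^{\mathrm{up}}_{k,n}(\bv_{k,n})=\rho_k\,\bv_{k,n}^{\H}\bD_k\bh_{k,n}\bh_{k,n}^{\H}\bD_k\bv_{k,n}/\big(\bv_{k,n}^{\H}\bD_k\tilde{\bSigma}^{\dagger}\bD_k\bv_{k,n}\big)$ is a generalized Rayleigh quotient in $\bD_k\bv_{k,n}$. Its maximizer is $\bD_k\bv_{k,n}\propto(\tilde{\bSigma}^{\dagger})^{\dagger}\bD_k\bh_{k,n}$, i.e.\ proportional to $\big(\sum_{i=1}^{K}\rho_i\bD_k(\kappa_{\mathrm{t}}^{2}\bh_{i,n}\bh_{i,n}^{\H}+\bkappa_{\mathrm{r}}^{2}\bF_{|\bh_i|^{2}}+\xi\Id_W)\bD_k\big)^{\dagger}\bD_k\bh_{k,n}$, which is \eqref{OptimalV} after inserting the scalar $\rho_k$ (irrelevant to the ratio but fixing the scaling) and using that $\bD_k$ is idempotent so the outer $\bD_k$ can be absorbed. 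Substituting this $\bv_{k,n}^{\mathrm{MMSE,up}}$ back into the quotient collapses it to $\rho_k\,\bv_{k,n}^{\H}\bD_k\tilde{\bSigma}\bD_k\bv_{k,n}$, which is the form in \eqref{Upper3} with $\tilde{\bSigma}=(\tilde{\bSigma}^{\dagger})^{\dagger}$; care is needed with pseudo-inverses if the aggregate matrix is rank-deficient (it generically is not, owing to the $\xi\Id_W$ term on the support of $\bD_k$). I expect the main obstacle to be precisely this bookkeeping of the $i=k$ term and the genie convention — making sure the self-transmit-distortion is handled so that the displayed $\tilde{\bSigma}^{\dagger}$ (with its $i\ne k$ sum) is consistent with the optimal combiner \eqref{OptimalV} (with its full $i=1,\dots,K$ sum), which amounts to absorbing the $i=k$ distortion term into the effective desired-signal gain in the standard way; the rest is routine linear algebra.
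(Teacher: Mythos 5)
Your proposal follows essentially the same route as the paper's proof in Appendix~A: a genie grants perfect CSI and cancels the multi-user interference, the residual transmit/receive distortion plus ATN is conditionally Gaussian so Gaussian signaling is optimal, and the resulting SINR is a generalized Rayleigh quotient whose maximizer is \eqref{OptimalV}. Your version is in fact more explicit than the paper's (which states these steps without computing the noise covariance term by term), and your flagged concern about reconciling the $i\ne k$ sum in $\tilde{\bSigma}^{\dagger}$ with the full sum in the combiner is a legitimate bookkeeping point the paper leaves implicit.
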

	\begin{proof}
		The proof is provided in Appendix~\ref{Proposition2}.
	\end{proof}
	
	Note that the pre-log sum in \eqref{Upper1} defines the fraction of samples per coherence block used for uplink data transmission. Clearly, the SINR-maximizing combiner in~\eqref{OptimalV} is not a scalable solution since its complexity increases with $ K $. 
	
	Following the approach in~\cite{Bjoernson2019a}, we assume that~\eqref{OptimalV} should include only the UEs which are served by the same APs because the interference afflicting UE $ k $ is basically a result of a small subset of the other UEs. The set of these users is defined as~\cite{Bjoernson2019a}
	\begin{align}
		\mathcal{P}_{k}=\left\{i: \bD_{k}\bD_{i}\ne \b0\right\}.
	\end{align}
	
	Hence, the scalable partial MMSE (PMMSE) version of~\eqref{OptimalV} can be written as
	\begin{align}
		\bv_{k,n}^{\mathrm{PMMSE,up}}\!&=\!\rho_{k} \bar{\bSigma}\bD_{k}{\bh}_{k,n},\label{OptimalV1}
	\end{align}
	where $ \bar{\bSigma}^{\dagger}\!\!=\!\!\bigg(\displaystyle \!\bD_{k}\Big(\sum_{i \in \mathcal{P}_{k}}\!\!\kappa_{\mathrm{t}}^{2}\rho_{i}{\bh}_{i,n}{\bh}_{i,n}^{\H}\!+\!\bkappa_{\mathrm{r}}^{2}\sum_{i \in \mathcal{P}_{k}}\!\!\rho_{i} \bF_{|\bh_{i}|^{2}}\!+\!\xi\Id_{W}\!\Big) \!\bD_{k}\!\!\bigg).$ The advantage of $ \bv_{k,n}^{\mathrm{PMMSE,up}} $ is that it does not scale with $ K $. It is worthwhile to mention that in the special case where all APs are serving user $ k $, transmitting to the same set of UEs, we obtain $ |\mathcal{P}_{k}|=\tau_{\mathrm{p}} $ \cite{Bjoernson2019a}.
	
	This decoding vector, based on the uplink SINR with perfect CSI is not indicated since it will not result in representative conclusions since it is obtained by an upper bound.\footnote{The algorithms derived in this paper are generally making use of imperfect CSI obtained by channel estimation. However, herein, by replacing the imperfect CSI with the perfect CSI, we obtain an upper bound that can be utilized to evaluate the performance penalty of having imperfect CSI.} The proposition in the following subsection presents the uplink MSE, which will allow the derivation of the practical optimal decoder by taking into account the inevitable imperfect CSI.
	
	\subsection{Optimal scalable MMSE combiner}\label{scalableMMSE1}
	Herein, we derive the HA-PMMSE combiner by minimizing the MSE.
	\begin{lemma}\label{MSElemma}
		The uplink MSE for UE $ k $ in the case of SCF mMIMO systems with HWI, conditioned on the imperfect channel estimates $ \hat{\bH}_{n}=\left[\hat{\bh}_{1,n} \ldots \hat{\bh}_{K,n}\right] $, is given by
		\begin{align}
			\mathrm{MSE}_{k,n}=\tr\left(\bC_{k,n}\right)\!,\label{MSElemma1}
		\end{align}
		where $ \bC_{k,n} $ is the error covariance matrix given by
		\begin{align}
			\bC_{k,n}&={{\bv}_{k,n}^{\H}\bD_{k}}{}\Big(\displaystyle\sum_{i=1}^{K}\rho_{i}\left (1+\kappa_{\mathrm{t}}^{2}\right )\left(\hat{\bh}_{i,n}\hat{\bh}_{i,n}^{\H}+\tilde{\bR}_{i}\right)\nn\\&+\bkappa_{\mathrm{r}}^{2}\sum_{i=1}^{K}\rho_{i}\left( \bF_{|\hat{\bh}_{i}|^{2}}+\bF_{|\tilde{\bR}_{i}|^{2}}\right)+\xi\Id_{W} \Big)\bD_{k}{\bv}_{k,n}\nn\\
			&-\rho_{k}{\bv}_{k,n}^{\H}\bD_{k}\hat{\bh}_{k,n}-\rho_{k}\hat{\bh}_{k,n}^{\H}\bD_{k}{\bv}_{k,n}+\rho_{k}.\label{errorcovariance}
		\end{align}
	\end{lemma}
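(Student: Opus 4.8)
The plan is to work directly from the definition of the conditional MSE of the combiner output, namely
$\mathrm{MSE}_{k,n}=\EE\{|s_{k,n}-\hat{s}_{k,n}|^{2}\mid\hat{\bH}_{n}\}$ with $\hat{s}_{k,n}=\bv_{k,n}^{\H}\bD_{k}\by_{n}$ read off from \eqref{estimatedSignal0} and $\by_{n}$ given by \eqref{ULTrans}, treating $\bv_{k,n}$ as a fixed vector (its optimization comes afterwards). Expanding the square gives the four terms $\EE\{|s_{k,n}|^{2}\}=\rho_{k}$, the two cross terms $-\EE\{s_{k,n}^{*}\hat{s}_{k,n}\mid\hat{\bH}_{n}\}$ and its conjugate, and the quadratic term $\EE\{|\hat{s}_{k,n}|^{2}\mid\hat{\bH}_{n}\}=\bv_{k,n}^{\H}\bD_{k}\,\EE\{\by_{n}\by_{n}^{\H}\mid\hat{\bH}_{n}\}\,\bD_{k}\bv_{k,n}$. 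So the whole computation reduces to evaluating the conditional mean $\EE\{\by_{n}\mid\hat{\bH}_{n}\}$ (needed only against $s_{k,n}^{*}$) and the conditional correlation matrix $\EE\{\by_{n}\by_{n}^{\H}\mid\hat{\bH}_{n}\}$.

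For the cross term, only the component $\bTheta_{k,n}\bh_{k}s_{k,n}=\bh_{k,n}s_{k,n}$ of $\by_{n}$ is correlated with $s_{k,n}$, since data symbols of different UEs are zero mean and independent, and $\delta_{\mathrm{t},n}^{i}$, $\deltav_{\mathrm{r},n}$, $\bxi_{n}$ are zero mean and independent of $s_{k,n}$. Using the LMMSE decomposition $\bh_{k,n}=\hat{\bh}_{k,n}+\tilde{\bh}_{k,n}$ from \eqref{current}, with $\tilde{\bh}_{k,n}$ zero mean and uncorrelated with $\hat{\bh}_{k,n}$, one obtains $\EE\{s_{k,n}^{*}\hat{s}_{k,n}\mid\hat{\bH}_{n}\}=\rho_{k}\bv_{k,n}^{\H}\bD_{k}\hat{\bh}_{k,n}$; together with its conjugate and the constant $\rho_{k}$ these yield the terms $-\rho_{k}\bv_{k,n}^{\H}\bD_{k}\hat{\bh}_{k,n}-\rho_{k}\hat{\bh}_{k,n}^{\H}\bD_{k}\bv_{k,n}+\rho_{k}$ in \eqref{errorcovariance}.

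For the quadratic term I would split $\by_{n}$ into the signal part $\sum_{i}\bh_{i,n}s_{i,n}$, the transmit-distortion part $\sum_{i}\bh_{i,n}\delta_{\mathrm{t},n}^{i}$, the receive distortion $\deltav_{\mathrm{r},n}$, and the ATN $\bxi_{n}$, and argue these four groups are mutually uncorrelated given $\hat{\bH}_{n}$ because the underlying symbol/noise processes are zero mean and mutually independent; hence $\EE\{\by_{n}\by_{n}^{\H}\mid\hat{\bH}_{n}\}$ is the sum of the four conditional correlation matrices. The signal and transmit-distortion blocks both collapse to a multiple of $\sum_{i}\rho_{i}\EE\{\bh_{i,n}\bh_{i,n}^{\H}\mid\hat{\bH}_{n}\}$ (using $\EE\{|\delta_{\mathrm{t},n}^{i}|^{2}\}=\kappa_{\mathrm{t}}^{2}\rho_{i}$ and that data symbols are uncorrelated across UEs), and $\EE\{\bh_{i,n}\bh_{i,n}^{\H}\mid\hat{\bH}_{n}\}=\hat{\bh}_{i,n}\hat{\bh}_{i,n}^{\H}+\tilde{\bR}_{i}$ again from the LMMSE decomposition, giving $(1+\kappa_{\mathrm{t}}^{2})\sum_{i}\rho_{i}(\hat{\bh}_{i,n}\hat{\bh}_{i,n}^{\H}+\tilde{\bR}_{i})$. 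For the receive-distortion block, since $\bTheta_{i,n}$ is diagonal with unit-modulus entries the per-antenna powers $|h_{mi}^{(l)}|^{2}$ coincide with those of the effective channel, so $\EE\{\bm\Upsilon^{m}_{n}\mid\hat{\bH}_{n}\}$ reduces to $\kappa_{\mathrm{r}_{m}}^{2}\sum_{i}\rho_{i}(\bF_{|\hat{\bh}_{mi}|^{2}}+\bF_{|\tilde{\bR}_{mi}|^{2}})$, which stacked over $m$ gives the $\bkappa_{\mathrm{r}}^{2}\sum_{i}\rho_{i}(\bF_{|\hat{\bh}_{i}|^{2}}+\bF_{|\tilde{\bR}_{i}|^{2}})$ term, while $\EE\{\bxi_{n}\bxi_{n}^{\H}\}=\bF_{\bxi}=\xi\Id_{W}$ under the common-hardware assumption. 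Substituting back reproduces the first bracket of \eqref{errorcovariance}; since the whole quantity is scalar it equals its own trace, which we record as $\tr(\bC_{k,n})$ so the statement carries over verbatim to multi-stream decoding.

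The main obstacle, and the place where care is needed, is the conditioning step: because $\bh_{k,n}=\bTheta_{k,n}\bh_{k}$ is a product of Gaussians, $\hat{\bh}_{k,n}$ and $\tilde{\bh}_{k,n}$ are uncorrelated but neither independent nor jointly Gaussian (cf. the Remark after the estimation proposition), so $\EE\{\bh_{i,n}\mid\hat{\bH}_{n}\}$ and $\EE\{\bh_{i,n}\bh_{i,n}^{\H}\mid\hat{\bH}_{n}\}$ are not literally the estimate and $\hat{\bh}_{i,n}\hat{\bh}_{i,n}^{\H}+\tilde{\bR}_{i}$. The correct reading is the standard \emph{use-and-then-forget}/LMMSE convention, as in \cite{Bjornson2015,massivemimobook}: the combiner treats $\hat{\bh}_{i,n}$ as a known deterministic vector and one retains only the first- and second-order (uncorrelated, zero-mean) statistics of the residual error. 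The secondary bookkeeping — verifying that the cross terms between $\bh_{i,n}s_{i,n}$ and $\bh_{j,n}\delta_{\mathrm{t},n}^{j}$, between distinct data symbols, and between the additive noises all vanish — is routine but must be done explicitly, each vanishing from zero mean and mutual independence of the relevant processes.
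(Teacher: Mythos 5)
Your proposal follows the same route as the paper's proof: expand the conditional MSE $\EE\{|\hat{s}_{k,n}-s_{k,n}|^{2}\mid\hat{\bH}_{n}\}$ directly, exploit the zero-mean mutual independence of the data symbols, distortions, and ATN to kill the cross terms, and invoke the LMMSE decomposition via $\tr\left(\EE\{\bh_{i,n}\bh_{i,n}^{\H}\mid\hat{\bh}_{i,n}\}\right)=\tr\left(\hat{\bh}_{i,n}\hat{\bh}_{i,n}^{\H}+\tilde{\bR}_{i}\right)$. Your explicit caveat that $\hat{\bh}_{i,n}$ and $\tilde{\bh}_{i,n}$ are uncorrelated but not independent, so the conditional-expectation identity must be read in the LMMSE/use-and-then-forget sense, is a point the paper's own (much terser) proof passes over silently, but it does not change the approach.
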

	\begin{proof}
		The proof is provided in Appendix~\ref{Lemma3}.	
	\end{proof}
	
	Taking advantage of Lemma~\ref{MSElemma}, we formulate the optimization problem for the minimization of the sum MSE as
	\begin{equation}
		\begin{aligned}
			&\min_{\rho_{i},\bv_{i,n}\forall i}&&\sum_{i=1}^{K}\mathrm{MSE}_{k,n}\\
			&&& \rho_{i}\ge 0,\; \forall i,
		\end{aligned}\label{optMSE}
	\end{equation}
	where $ \mathrm{MSE}_{k,n} $ is given by Lemma \ref{MSElemma}. Given that the combiner vector of each UE affects only the MSE of that respective UE, we focus on the minimization of $ \mathrm{MSE}_{k,n} $ \cite{Zarei2017}. Hence, the following proposition provides the optimal scalable combiner by minimizing the individual $ \mathrm{MSE}_{k,n} $ for UE $ k $.\footnote{It is worthwhile to mention that the seminal works in CF mMIMO systems developed power control algorithms with increasing complexity with the number of UEs, being unfeasible in practice. }
	
	\begin{proposition}\label{MMSE_decoder}
		The HA-PMMSE receive combining vector at channel use $ n $, taking into account the transceiver HWIs in SCF mMIMO systems, is written as
		\begin{align}
			&\!\!\!	\bv_{k,n}^{\mathrm{HA-PMMSE}}\!=\!\rho_{k}\Big(\!\!\displaystyle\sum_{i \in \mathcal{P}_{k}}\!\!\rho_{i}\!\left (1\!+\!\kappa_{\mathrm{t}}^{2}\right )\! \nn\\&\times\!\bD_{k}\Big(\hat{\bh}_{i,n}\hat{\bh}_{i,n}^{\H}\!+\!\bkappa_{\mathrm{r}}^{2}\rho_{i} \bF_{|\hat{\bh}_{i}|^{2}}\!+\!\xi\Id_{W} \!\Big) \bD_{k}+\!\tilde{\bDelta}\!\Big)^{\!\dagger}\bD_{k}{\hat{\bh}_{k,n}},\label{OptimalMMSE}
		\end{align}
		where $ \tilde{\bDelta} = \displaystyle\sum_{i \in \mathcal{P}_{k}} \bD_{k} \tilde{\bDelta}_{i} \bD_{k}$ with $ \tilde{\bDelta}_{i} $ being the covariance matrix including the estimation error matrix and HWIs, which is given by
		\begin{align}
			\tilde{\bDelta}_{i}=\rho_{i}\left (1+\kappa_{\mathrm{t}}^{2}\right )\tilde{\bR}_{i}+\bkappa_{\mathrm{r}}^{2}\rho_{i}\bF_{|{\tilde{\bR}}_{i}|^{2}}.
		\end{align}
	\end{proposition}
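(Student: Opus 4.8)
The plan is to derive~\eqref{OptimalMMSE} by directly minimizing the conditional sum-MSE of Lemma~\ref{MSElemma} over the combining vector, exploiting that, for fixed powers $\{\rho_i\}$, the objective is a convex quadratic in $\bv_{k,n}$ alone (this is exactly the reason the text reduces~\eqref{optMSE} to the per-UE minimization of $\mathrm{MSE}_{k,n}$). Reading off~\eqref{errorcovariance}, $\mathrm{MSE}_{k,n}=\tr(\bC_{k,n})$ is scalar-valued and has the generic form $\bv_{k,n}^{\H}\bA\bv_{k,n}-\rho_k\bv_{k,n}^{\H}\bb-\rho_k\bb^{\H}\bv_{k,n}+\rho_k$, where $\bb=\bD_k\hat{\bh}_{k,n}$ and $\bA=\bD_k\bM\bD_k$ with $\bM=\sum_{i=1}^{K}\rho_i(1+\kappa_{\mathrm{t}}^{2})(\hat{\bh}_{i,n}\hat{\bh}_{i,n}^{\H}+\tilde{\bR}_i)+\bkappa_{\mathrm{r}}^{2}\sum_{i=1}^{K}\rho_i(\bF_{|\hat{\bh}_i|^{2}}+\bF_{|\tilde{\bR}_i|^{2}})+\xi\Id_W$. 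Since each constituent of $\bM$ is positive semi-definite, $\bA\succeq\b0$ and the MSE is convex in $\bv_{k,n}$.

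Next I would impose the first-order optimality condition: taking the Wirtinger derivative of $\mathrm{MSE}_{k,n}$ with respect to $\bv_{k,n}^{*}$ and setting it to zero gives the normal equation $\bA\bv_{k,n}=\rho_k\bb$. Because the block-diagonal DCC matrix $\bD_k$ annihilates the antenna blocks of the APs that do not serve UE~$k$, the matrix $\bA=\bD_k\bM\bD_k$ is in general rank-deficient, so the relevant (minimum-norm) solution is $\bv_{k,n}=\rho_k\bA^{\dagger}\bb=\rho_k(\bD_k\bM\bD_k)^{\dagger}\bD_k\hat{\bh}_{k,n}$; this is precisely where the Moore--Penrose pseudo-inverse in the statement comes from. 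I would also remark that only the component of $\bv_{k,n}$ in the range of $\bD_k$ enters $\hat{s}_{k,n}$ through~\eqref{estimatedSignal0}, so the combiner output is unaffected by the choice of solution on the null space.

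Finally, to obtain the scalable HA-PMMSE combiner I would invoke the argument of~\cite{Bjoernson2019a}: the interference afflicting UE~$k$ is dominated by the UEs in $\mathcal{P}_{k}=\{i:\bD_k\bD_i\neq\b0\}$, so the two summations over all $K$ UEs inside $\bM$ are truncated to $i\in\mathcal{P}_{k}$, which keeps the per-AP complexity bounded as $K\to\infty$ while leaving the projector $\bD_k$ (hence the set of serving APs) unchanged. Regrouping the truncated matrix into the part carried by the channel estimates $\hat{\bh}_{i,n}$ together with the additive-distortion and noise diagonal terms, plus the residual part $\tilde{\bDelta}=\sum_{i\in\mathcal{P}_{k}}\bD_k\tilde{\bDelta}_i\bD_k$ with $\tilde{\bDelta}_i=\rho_i(1+\kappa_{\mathrm{t}}^{2})\tilde{\bR}_i+\bkappa_{\mathrm{r}}^{2}\rho_i\bF_{|\tilde{\bR}_i|^{2}}$, then yields~\eqref{OptimalMMSE}.

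The computation is otherwise routine; the one point that needs genuine care is the singularity of $\bA=\bD_k\bM\bD_k$ — one must justify that the pseudo-inverse expression is an actual minimizer and that it does not depend on the null-space ambiguity of $\bv_{k,n}$, and one must be clear that the restriction of the interference sum to $\mathcal{P}_{k}$ is a \emph{design} step (justified as in~\cite{Bjoernson2019a}) rather than an exact identity, so that~\eqref{OptimalMMSE} is near-optimal by construction rather than exactly MSE-optimal.
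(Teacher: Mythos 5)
Your proposal is correct and follows essentially the same route as the paper's own proof: Lemma~\ref{MSElemma} gives a convex quadratic in $\bv_{k,n}$, the first-order condition yields the normal equation $\bD_k\bM\bD_k\bv_{k,n}=\rho_k\bD_k\hat{\bh}_{k,n}$, and the restriction of the interference sum to $\mathcal{P}_k$ is then imposed as the scalability design step of~\cite{Bjoernson2019a}. Your added care about the rank deficiency of $\bD_k\bM\bD_k$ (justifying the pseudo-inverse and the irrelevance of the null-space component to $\hat{s}_{k,n}$) and your explicit remark that the truncation to $\mathcal{P}_k$ is a heuristic rather than an exact optimality statement are points the paper leaves implicit, but they do not change the argument.
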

	\begin{proof}
		The derivation of the HA-PMMSE relies on the minimization of the $ \mathrm{MSE}_{k,n} $. Taking the derivative of the expression in~\eqref{errorcovariance} with respect to $ {\bv}_{k,n} $ and setting it to zero, i.e., $ 	\pdv{	\bC_{k,n}}{\bv_{k,n}} =0$, we obtain
		\begin{align}
			&	\bD_{k}\Big(\displaystyle\sum_{i=1}^{K}\!\rho_{i}\!\left (1\!+\!\kappa_{\mathrm{t}}^{2}\right )\!\!\left(\hat{\bh}_{i,n}\hat{\bh}_{i,n}^{\H}\!+\!\tilde{\bR}_{i}\right)\!
			\nn\\&+\!\bkappa_{\mathrm{r}}^{2}\sum_{i=1}^{K}\!\rho_{i}\!\!\left( \bF_{|\hat{\bh}_{i}|^{2}}\!+\!\bF_{|\tilde{\bR}_{i}|^{2}}\right)\!+\!\xi\Id_{W} \Big)\bD_{k}{\bv}_{k,n}=
			\rho_{k}\bD_{k}\hat{\bh}_{k,n},\label{errorcovariance1}
		\end{align}
		which results in the optimal decoder given by~\eqref{OptimalMMSE} after taking into consideration the scalability as described in~\eqref{OptimalV1}.
	\end{proof}
	
	Substitution of~\eqref{OptimalMMSE} into~\eqref{optMSE} results in the optimization problem depending only on the transmit powers $ \rho_{i} $. After finding the transmit powers, the optimal decoder is given by~\eqref{OptimalMMSE}. However, this is a challenging problem. Since the central point of this work is the scalable implementation of CF mMIMO systems, we have to apply a scalable power control algorithm. Heuristic scalable algorithms for CF mMIMO systems are found in~\cite{Nayebi2017,Buzzi2017a}, but their comparative evaluation is out of the scope of this work. Herein, for the sake of exposition, we resort to the basic equal power allocation which is a scalable solution, i.e., $\rho_{i}=\rho~\forall i $ ~\cite{Bjoernson2019a}. In such a case, we obtain
	
	\begin{align}
		&\bv_{k,n}^{\mathrm{HA-PMMSE}}=\Big(\displaystyle\sum_{i \in \mathcal{P}_{k}}\left (1+\kappa_{\mathrm{t}}^{2}\right )\bD_{k}
		\nn\\&\times\Big(\hat{\bh}_{i,n}\hat{\bh}_{i,n}^{\H}+\bkappa_{\mathrm{r}}^{2}\sum_{i \in \mathcal{P}_{k}} \bF_{|\hat{\bh}_{i}|^{2}}+\frac{\xi}{\rho}\Id_{W}\Big)\bD_{k}+\frac{1}{\rho}\tilde{\bDelta} \Big)^{\dagger}\bD_{k}{\hat{\bh}_{k,n}}{}.\label{OptimalMMSE1}
	\end{align}
	
	For the reason of comparison, we consider the conventional MMSE decoder $ \bv_{k,n}^{\mathrm{HU-MMSE}} $ and scalable MMSE decoder $ \bv_{k,n}^{\mathrm{HU-PMMSE}} $ in CF mMIMO systems studied in~\cite{Bjoernson2019} and \cite{Bjoernson2019a}, respectively. Hence, the benchmark decoders are 
	\begin{align}
		\bv_{k,n}^{\mathrm{HU-MMSE}}&=\left(\displaystyle\sum_{i =1 }^{K}\hat{\bh}_{i,n}\hat{\bh}_{i,n}^{\H}+\bZ_{k}\! \right)^{-1}\hat{\bh}_{k,n},\label{MMSEC}\\
		\bv_{k,n}^{\mathrm{HU-PMMSE}}&\!=\!\left(\displaystyle\sum_{i \in \mathcal{P}_{k}}\!\!\bD_{k}\hat{\bh}_{i,n}\hat{\bh}_{i,n}^{\H}\bD_{k}\!+\!\bar{\bZ}_{k} \!\right)^{\!\!\!\dagger}\!{\bD_{k}\hat{\bh}_{k,n}}{},\label{MMSES}
	\end{align}
	where $ \bZ_{k}\!=\!\bD_{k}\!\left(\displaystyle\sum_{i =1 }^{K}\!\tilde{\bR}_{i}\!+\!\frac{1}{\rho}\Id_{W}\!\!\right) \!\bD_{k} $ and $ 	\bar{\bZ}_{k}\! =\!\bD_{k}\left(\displaystyle\sum_{i \in \mathcal{P}_{k}}\!\tilde{\bR}_{i}\!+\!\frac{1}{\rho}\Id_{W}\!\!\right) \!\bD_{k} $.
	\begin{remark}
		Clearly, the decoders, described by \eqref{MMSEC} and \eqref{MMSES}, not including the HWIs, could be used if the designer is unaware of the hardware impairments. We have provided all these combiners for the sake of a complete comparison. Specifically, comparing \eqref{MMSES} with \eqref{OptimalMMSE1}, the latter expression contains additional terms, which correspond to the HWIs and the channel estimation error due to pilot contamination. Also, \eqref{OptimalMMSE1} includes the ATN. Note that $ \eqref{MMSEC} $ refers to the most basic expression that accounts for neither HWI nor scalability design.
	\end{remark}
	
	\subsection{Lower bound}
	Since the derivation of lower bounds is in general of higher importance than upper bounds such as the bound described by Proposition~\ref{Proposition:Upperbound}, herein, we focus on the derivation of an accurate lower bound by following the typical analysis in mMIMO theory. In particular, according to~\cite{Medard2000}, we rewrite the received signal in terms of the average effective channel $ \EE\left\{\bv_{k,n}^{\H}\bD_{k}\bTheta_{k,n}{\bh}_{k}\right\}$. Hence, we have
	\begin{align}
		\hat{s}_{k,n}\!&=\!\EE\left\{\bv_{k,n}^{\H}\bD_{k}\bTheta_{k,n}{\bh}_{k}\right\}s_{k}\!+\! \left(\right.\!\!\bv_{k,n}^{\H}\bD_{k}\bTheta_{k,n}{\bh}_{k}s_{k}\nn\\&-\EE\left\{\bv_{k,n}^{\H}\bD_{k}\bTheta_{k,n}{\bh}_{k}\right\}s_{k}\!\!\left.\right)+\!\!\sum_{i\ne k}^{K}\!\bv_{k,n}^{\H}\bD_{k}\bh_{i,n}s_{i}\!\nn\\
		&+\!\sum_{i=1}^{K}\!\bv_{k,n}^{\H}\bD_{k}\bh_{i,n}\delta_{\mathrm{t},n}^{i}\!+\!\bv_{k,n}^{\H}\bD_{k}\!\left(\deltav_{\mathrm{r},n}\!+\!\bxi_{n}\right). \label{estimatedSignal1} 
	\end{align}
	

	\begin{proposition}\label{LowerBound1}
		The uplink average SE for user $ k$ is lower bounded by
		\begin{align}
			\mathrm{SE}_{k}^{\mathrm{lo}}	=\frac{1}{\tau_{\mathrm{c}}}\sum_{n=1}^{\tau_{\mathrm{c}}-\tau_{\mathrm{p}}}\log_{2}\left ( 1+\tilde{\gamma}^{\mathrm{lo}}_{k,n}\right)\!,\label{LowerBound}
		\end{align}
		where $ \tilde{\gamma}^{\mathrm{lo}}_{k,n}=\frac{S_{k,n}}{I_{k,n}}$ with $ S_{k,n} $ and $ I_{k,n} $ being the desired signal power and the interference
		plus noise power, respectively, which are defined as
		\begin{align}
			S_{k,n}&=\rho_{k}|\EE\left\{\bv_{k,n}^{\H}\bD_{k}{\bh}_{k,n}\right\}\!|^{2}\label{sig11}\\
			I_{k,n}&=\rho_{k}\mathrm{Var}\left\{\bv_{k,n}^{\H}\bD_{k}{\bh}_{k,n}\right\}+\sum_{i\ne k}^{K}\rho_{i}\EE\left\{|\bv_{k,n}^{\H}\bD_{k}\bh_{i,n}|^{2}\right\} \nn\\&+\sigma_{\mathrm{t}}^{2}+\sigma_{\mathrm{r}}^{2}\!+\!\xi_{n}\EE\left\{\|\bv_{k,n}^{\H}\bD_{k}\|^{2}\right\}\!,\label{int1}
		\end{align}
		where $\sigma_{\mathrm{t}}^{2}= \sum_{i=1}^{K}\kappa_{\mathrm{t}}^{2}\rho_{i}\EE\left\{|\bv_{k,n}^{\H}\bD_{k}\bh_{i,n}|^{2} \right\}$ and $\sigma_{\mathrm{r}}^{2}\!=\!\EE\left\{\bv_{k,n}^{\H}\bD_{k}\bkappa_{\mathrm{r}}^{2}\!\left(\sum_{i=1}^{K}\rho_{i} \bF_{|\bh_{i}|^{2}}\right)\bD_{k} \bv_{k,n}\right\} $ are the variances of the components corresponding to the additive HWIs at the transmit (UE $ k $) and receive side (output of the MMSE decoder), respectively.
	\end{proposition}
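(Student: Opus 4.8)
The plan is to apply the standard ``use-and-then-forget'' (UatF) bound of~\cite{Medard2000}, as already prepared by the rewriting in~\eqref{estimatedSignal1}. Treat the \emph{deterministic} scalar $g_{k,n}\triangleq\EE\left\{\bv_{k,n}^{\H}\bD_{k}\bTheta_{k,n}\bh_{k}\right\}=\EE\left\{\bv_{k,n}^{\H}\bD_{k}\bh_{k,n}\right\}$ as the known effective channel gain and collect every remaining term of~\eqref{estimatedSignal1} into an effective noise $n_{k,n}$, so that $\hat s_{k,n}=g_{k,n}s_{k}+n_{k,n}$. The classical worst-case-uncorrelated-noise argument says that if $n_{k,n}$ is zero mean and uncorrelated with $s_{k}$, then the mutual information $I(s_{k};\hat s_{k,n})$ is lower bounded by $\log_{2}\big(1+\rho_{k}|g_{k,n}|^{2}/\EE\{|n_{k,n}|^{2}\}\big)$, the rate achieved by treating $n_{k,n}$ as independent Gaussian noise of the same variance. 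It then remains to (i) check the two structural hypotheses on $n_{k,n}$, (ii) identify $\rho_{k}|g_{k,n}|^{2}$ with $S_{k,n}$ and $\EE\{|n_{k,n}|^{2}\}$ with $I_{k,n}$, and (iii) average over channel uses.

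First I would verify $\EE\{n_{k,n}\}=0$ and $\EE\{s_{k}^{*}n_{k,n}\}=0$ by going through~\eqref{estimatedSignal1} term by term: the beamforming-gain-uncertainty term $(\bv_{k,n}^{\H}\bD_{k}\bh_{k,n}-g_{k,n})s_{k}$ has, by definition of $g_{k,n}$, zero conditional mean given $s_{k}$ and is uncorrelated with $g_{k,n}s_{k}$ because its gain factor is zero mean; the multi-user terms vanish in correlation with $s_{k}$ because $\{s_{i}\}$ are zero mean, unit variance and mutually independent; and the transmit/receive distortion and ATN terms are zero mean and independent of $\{s_{i}\}$. Crucially, this argument needs only that $\hat{\bh}_{k,n}$ and $\tilde{\bh}_{k,n}$ are zero mean (not that they are independent or jointly Gaussian), which is exactly what the LMMSE estimator of Section~\ref{Channel6Estimation} guarantees. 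This immediately yields $S_{k,n}=\rho_{k}|\EE\{\bv_{k,n}^{\H}\bD_{k}\bh_{k,n}\}|^{2}$ as in~\eqref{sig11}.

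Next I would compute $\EE\{|n_{k,n}|^{2}\}$ as the sum of the component variances, which is legitimate precisely because the above check shows the components are mutually uncorrelated. The beamforming-uncertainty term gives $\rho_{k}\mathrm{Var}\{\bv_{k,n}^{\H}\bD_{k}\bh_{k,n}\}$ and the inter-user term gives $\sum_{i\ne k}^{K}\rho_{i}\EE\{|\bv_{k,n}^{\H}\bD_{k}\bh_{i,n}|^{2}\}$. For the transmit distortion I would condition on the channels and on $\bv_{k,n}$, use $\EE\{|\delta_{\mathrm{t},n}^{i}|^{2}\}=\kappa_{\mathrm{t}}^{2}\rho_{i}$ from~\eqref{eta_tU} together with independence across $i$, and take the outer expectation to get $\sigma_{\mathrm{t}}^{2}=\sum_{i=1}^{K}\kappa_{\mathrm{t}}^{2}\rho_{i}\EE\{|\bv_{k,n}^{\H}\bD_{k}\bh_{i,n}|^{2}\}$. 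For the receive distortion I would use the conditional law $\deltav_{\mathrm{r},n}\sim\cC\cN(\b0,\bm{\Upsilon}_{n})$ with $\bm{\Upsilon}_{n}=\bkappa_{\mathrm{r}}^{2}\sum_{i=1}^{K}\rho_{i}\bF_{|\bh_{i}|^{2}}$, obtaining $\sigma_{\mathrm{r}}^{2}=\EE\{\bv_{k,n}^{\H}\bD_{k}\bkappa_{\mathrm{r}}^{2}(\sum_{i=1}^{K}\rho_{i}\bF_{|\bh_{i}|^{2}})\bD_{k}\bv_{k,n}\}$; and the ATN, being $\cC\cN(\b0,\bF_{\bxi})$ and independent of all channels, contributes $\xi_{n}\EE\{\|\bv_{k,n}^{\H}\bD_{k}\|^{2}\}$. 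Adding these four contributions produces exactly $I_{k,n}$ in~\eqref{int1}.

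The per-channel-use bound $\log_{2}(1+S_{k,n}/I_{k,n})$ is then an achievable SE for channel use $n$; averaging over the $\tau_{\mathrm{c}}-\tau_{\mathrm{p}}$ data channel uses and prepending the $1/\tau_{\mathrm{c}}$ pre-log factor, as in~\cite{Pitarokoilis2015}, gives~\eqref{LowerBound}. I expect the main obstacle to be the bookkeeping in the uncorrelatedness/variance step rather than any deep inequality: because the multiplicative phase noise makes the effective distortions products of Gaussians (so that, as noted in the Remark above, $\hat{\bh}_{k,n}$ and $\tilde{\bh}_{k,n}$ are uncorrelated but neither independent nor Gaussian) and because the receive-distortion covariance is itself channel dependent, every cross-term must be killed by iterated conditioning instead of a blanket independence claim. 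Once that is done, each remaining variance is a routine second-moment evaluation.
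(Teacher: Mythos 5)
Your proposal is correct and follows essentially the same route as the paper's proof, which is a brief sketch invoking the worst-case-uncorrelated-Gaussian-noise bound of~\cite{Medard2000} and~\cite{Hassibi2003} with the deviation from the average effective channel gain treated as noise, and averaging over channel uses as in~\cite{Pitarokoilis2015}. You simply fill in the zero-mean/uncorrelatedness checks and the term-by-term variance bookkeeping that the paper leaves implicit, and your resulting identifications of $S_{k,n}$ and $I_{k,n}$ match~\eqref{sig11} and~\eqref{int1} exactly.
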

	\begin{proof}
		The proof is provided in Appendix~\ref{Proposition4}.
	\end{proof}

	\section{Deterministic Equivalent Analysis}\label{Deterministic}
	The theory of DEs , which is employed in this work to derive the SE, concerns derivations in the asymptotic limit $K,W \to \infty$ while their ratio ${K}/{W}=\beta$ is fixed and the number of antennas per AP $L $ is finite, i.e., $ M \to \infty $. Also, we assume similar assumptions to \cite[Assump. A1-A3]{Hoydis2013} concerning the covariance matrices under study. Given that CF mMIMO systems consider a large number of APs, $ W $ is quite large and validates us to employ the DE analysis to obtain the DE uplink achievable rate. The main advantage of this analysis is that the corresponding results are tight approximations, even for moderate system dimensions, e.g., an $ 8 \times 8 $ matrix~\cite{Couillet2011}. Also, the DE expressions make lengthy Monte-Carlo simulations unnecessary. Hence, the extracted conclusions are of significant importance. 
	
	The DE of the SINR $\tilde{\gamma}^{\mathrm{lo}}_{k,n}$, provided by Proposition~\ref{LowerBound1}, obeys to $\tilde{\gamma}^{\mathrm{lo}}_{k,n}-\bar{\gamma}^{\mathrm{lo}}_{k,n}\xrightarrow[M \rightarrow \infty]{\mbox{a.s.}}0$, while the DE of the SE of user $k$, relied on the dominated convergence and the continuous mapping theorem~\cite{Couillet2011}, is given by
	\begin{align}
		\mathrm{SE}_{k}^{\mathrm{lo}}-\mathrm{\overline{SE}}_{k}^{\mathrm{lo}}\xrightarrow[ \rightarrow \infty]{\mbox{a.s.}}0,\label{DeterministicSumrate}
	\end{align}
	where $\mathrm{\overline{SE}}_{k}^{\mathrm{lo}}= \frac{1}{T_{\mathrm{c}}}\sum_{n=1}^{T_{\mathrm{c}}-\tau}\log_{2}(1 + \bar{\gamma}^{\mathrm{lo}}_{k,n}) $.
	Taking into account Proposition~\ref{MMSE_decoder}, we have $\bv_{k,n}= \bv_{k,n}^{\mathrm{HA-PMMSE}} $ while we rewrite the HA-PMMSE decoder at channel use $ n $ as
	\begin{align}
		\bv_{k,n}^{\mathrm{HA-PMMSE}}=\bSigma\bD_{k}\hat{\bh}_{k,n},\label{OptimalMMSE2}
	\end{align}
	where $\bSigma^{\dagger}= \displaystyle\sum_{i \in \mathcal{P}_{k}}\rho_{i}\left (1+\kappa_{\mathrm{t}}^{2}\right )\bD_{k}\Big(\hat{\bh}_{i,n}\hat{\bh}_{i,n}^{\H}+\bkappa_{\mathrm{r}}^{2}\rho_{i} \bF_{|\hat{\bh}_{i}|^{2}}+\al W \xi\Id_{W}\Big)\bD_{k}+\tilde{\bDelta} $ with $\al$ being a regularization scaled by $W$ to make expressions converge to a constant as $W$, $K\to \infty$. A similar regularization takes place during the simulations for the benchmark decoders given by \eqref{MMSEC} and \eqref{MMSES}, respectively.
	\begin{Theorem}\label{theorem:ULDEMMSE}
		The uplink DE of the SINR of user $k$ at channel use $n$ with HA-PMMSE decoding in SCF mMIMO systems, accounting for imperfect CSI and transceiver HWIs, is given by
		\begin{align}
			\!\!\!	\bar{\gamma}_{k,n}^{\mathrm{lo}}\!=\!	\frac{\tilde{ \delta}_{k}^{2}}{\displaystyle 		\kappa_{\mathrm{t}}^{2}\tilde{\delta}^{2}_{k}\!+\!\tilde{\zeta}_{ki}\!+\!\left(1\!+\!\kappa_{\mathrm{t}}^{2}\right)\!\!\sum_{i\ne k}^{K}\!\frac{\rho_{i}}{\rho_{k}}\!\tilde{\mu}_{ki}		\!+\!
				\eta^{'}_{k}\sum_{i=1}^{K}\frac{\rho_{i}}{\rho_{k}} e_{i}\!+\!\frac{\tilde{e}_{k}^{'}}{\rho_{k}}}\label{theo1}
		\end{align}
		with $ \delta_{k}=\frac{\left (1+\kappa_{\mathrm{t}_\mathrm{UE}}^{2}\right )}{W}\tr\bD_{k}\bPhi_{k}\bD_{k}\bT $, $ \tilde{\delta}_{k} =\frac{1}{W}\tr \bD_{k}\bPhi_{k}\bD_{k}\bT$, $ 	\tilde{\zeta}_{ki}=\frac{1}{W^{2}}\tr\left(\bD_{k}\left( \bR_{k} - \bPhi_{k}\right)\bD_{k}\bT^{'}_{k}\right) $,	 $ e_{i}=\frac{1 }{W}\tr\left( \bD_{k}^{2}\bkappa_{\mathrm{r}}^{2}\bR_{i} \right) $, 
		$ \zeta_{ki}=\frac{1}{W^{2}}\tr\left(\bD_{k}\bR_{i}\bD_{k}\bT^{'}_{k}\right) $, 	$ \nu_{ki}=\frac{1}{W}\tr\left(\bD_{k}\bPhi_{i}\bD_{k}\bT\right)$, $ 	\mu_{ki}=\frac{1}{W^{2}}\tr\left(\bD_{k}\bPhi_{i}\bD_{k}\bT^{'}_{k}\right) $, $\tilde{\mu}_{ki}=\zeta_{ki}\!+\!\frac{|\nu_{ki}|^{2}\mu_{ki}}{\left(1\!+\!\delta_{i}\right)^{2}}\!-\!2\mathrm{Re}\left\{\!\!
		\frac{ \nu_{ki}^{*}\mu_{ki}}{\left(1\!+\!\delta_{i}\right)}\right\}$, $ \eta^{'}_{k}=\frac{1}{W}\tr\bT^{'}_{k}$, and $ \tilde{e}_{k}^{'}=\frac{1 }{W^{2}}\tr\left( \bT^{''}_{k}\right) $,
		where
		\begin{itemize}
			\renewcommand{\labelitemi}{$\ast$}
			\item $\bT\!=\!\!\Bigg(\!\!\displaystyle\sum_{i \in \mathcal{P}_{k}} \bD_{k}\!\!\left(\!\!\frac{1+\kappa_{\mathrm{t}_\mathrm{UE}}^{2}}{W\!\left(1\!+\!\delta_{i}\right)}\bPhi_{i}\!+\!\frac{\bkappa_{\mathrm{r}}^{2}}{W}\Id_{W}\!\circ\!\bPhi_{i}\!+\!\al \xi\Id_{W}\!\!\right)\!\!\bD_{k}+\frac{1}{W}\tilde{\bDelta}\Bigg)^{\dagger}$,
			\item $ \bT^{'}_{k}=\bT \bD_{k}\bPhi_{k}\bD_{k}\bT+\sum_{i=1}^{K}\frac{\delta^{'}_{i}\bT\bD_{k}\bPhi_{i}\bD_{k}\bT}{W\left(1+\delta_{i}\right)^{2}}$,
			\item $ \bT^{''}_{k}=\bT\bD_{k}\bT +\sum_{i=1}^{K}\frac{\delta_{i}\bT\bD_{k}\bPhi_{i}\bD_{k}\bT}{W\left(1+\delta_{i}\right)^{2}}$,
			\item $ \deltav^{'} =\left(\Id_{K}-\bF\right)^{-1}\bf$ with $ 	\left[\bF\right]_{k,i}=\frac{1}{W^{2}\left(1+ \delta_{i}\right)}\tr\left(\bD_{k}\bPhi_{k}\bD_{k}\bT\bD_{k}\bPhi_{i}\bD_{k}\bT\right) $
		\end{itemize} 
		while all the covariance matrices are assumed with uniformly bounded spectral norms with
		respect to $ W $.
	\end{Theorem}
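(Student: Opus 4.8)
The plan is to compute deterministic equivalents for each of the expectations appearing in the SINR expression $\tilde{\gamma}^{\mathrm{lo}}_{k,n}=S_{k,n}/I_{k,n}$ of Proposition~\ref{LowerBound1}, after substituting the HA-PMMSE combiner~\eqref{OptimalMMSE2}. The key structural observation is that $\bv_{k,n}^{\mathrm{HA-PMMSE}}=\bSigma\bD_k\hat{\bh}_{k,n}$ where $\bSigma$ depends on $\hat{\bh}_{k,n}$ itself; so the first step is to apply a rank-one perturbation argument (matrix inversion lemma) to split off the desired term, writing $\bSigma\bD_k\hat{\bh}_{k,n}=\frac{\bSigma_k\bD_k\hat{\bh}_{k,n}}{1+\rho_k(1+\kappa_{\mathrm t}^2)\hat{\bh}_{k,n}^{\H}\bD_k\bSigma_k\bD_k\hat{\bh}_{k,n}}$, where $\bSigma_k$ is $\bSigma$ with the $i=k$ term removed, hence independent of $\hat{\bh}_{k,n}$. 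For the numerator $S_{k,n}=\rho_k|\EE\{\bv_{k,n}^{\H}\bD_k\hat{\bh}_{k,n}\}|^2$, the quadratic form $\hat{\bh}_{k,n}^{\H}\bD_k\bSigma_k\bD_k\hat{\bh}_{k,n}$ concentrates, by the trace lemma, around $\frac{1}{W}\tr(\bD_k\bPhi_k\bD_k\bSigma_k)$ since $\hat{\bh}_{k,n}\sim\mathcal{CN}(\b0,\bPhi_k)$ (with the $\bD_k$ absorption), and this together with the deterministic-equivalent matrix $\bT$ for $\bSigma_k$ yields the Stieltjes-type fixed-point quantity $\delta_k=\frac{1+\kappa_{\mathrm t_{\mathrm{UE}}}^2}{W}\tr\bD_k\bPhi_k\bD_k\bT$. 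This gives $S_{k,n}\asymp\rho_k\tilde\delta_k^2/(1+\delta_k)^2$ up to the common normalization that cancels in the ratio, explaining the $\tilde\delta_k^2$ in the numerator of~\eqref{theo1}.

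Second, I would treat the interference and distortion terms in $I_{k,n}$ one by one. The multi-user term $\sum_{i\ne k}\rho_i\EE\{|\bv_{k,n}^{\H}\bD_k\bh_{i,n}|^2\}$ requires the standard second-order deterministic-equivalent machinery: after the rank-one decoupling in both $i$ and $k$, one expands the quadratic form using the decomposition $\bh_{i,n}=\hat{\bh}_{i,n}+\tilde{\bh}_{i,n}$, uses Lemma~\ref{LemmaPN} to handle the phase-noise matrices $\bTheta_{i,n}$ (the relevant expectations collapse to the $\bLambda$-type exponential damping already baked into $\bPhi_i$ and $\tilde{\bR}_i$), and invokes the auxiliary deterministic equivalents $\bT'_k$, $\delta'_i$ governed by the fixed-point system $\deltav' = (\Id_K-\bF)^{-1}\bf$. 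The cross terms between $\hat{\bh}_{i,n}$ inside $\bSigma$ and $\hat{\bh}_{i,n}$ in the quadratic form produce exactly the $\tilde\mu_{ki}=\zeta_{ki}+\frac{|\nu_{ki}|^2\mu_{ki}}{(1+\delta_i)^2}-2\Re\{\nu_{ki}^*\mu_{ki}/(1+\delta_i)\}$ structure. The receive-distortion term $\sigma_{\mathrm r}^2$ and the ATN term $\xi_n\EE\{\|\bv_{k,n}^{\H}\bD_k\|^2\}$ are simpler: they reduce, via the trace lemma and the $\bT''_k$ equivalent, to $\eta'_k\sum_i\frac{\rho_i}{\rho_k}e_i$ and $\tilde e_k'/\rho_k$ respectively, where the Hadamard products $\Id_W\circ\bPhi_i$ encode that only the diagonal of the channel Gram matrix enters the receive distortion. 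The variance term $\rho_k\mathrm{Var}\{\bv_{k,n}^{\H}\bD_k\hat{\bh}_{k,n}\}$ contributes the $\tilde\zeta_{ki}$ piece after subtracting the mean already accounted for in $S_{k,n}$.

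Third, collecting all deterministic equivalents and dividing numerator by denominator, the common prefactors $(1+\delta_k)$ and the $W$-scalings cancel, leaving~\eqref{theo1}; the almost-sure convergence $\tilde\gamma^{\mathrm{lo}}_{k,n}-\bar\gamma^{\mathrm{lo}}_{k,n}\to 0$ then follows from continuity of the ratio and the uniform boundedness of spectral norms assumed in the statement, with the passage to $\overline{\mathrm{SE}}_k^{\mathrm{lo}}$ handled by dominated convergence and the continuous mapping theorem as already noted before the theorem. The main obstacle I expect is \emph{establishing existence, uniqueness, and boundedness of the fixed-point system} defining $\bT$, $\delta_i$, $\delta'_i$ — in particular verifying that $\Id_K-\bF$ is invertible with a uniformly bounded inverse — since the HWI terms ($\kappa_{\mathrm t}^2$, $\bkappa_{\mathrm r}^2$, $\xi$, and the Hadamard-product structure $\Id_W\circ\bPhi_i$) perturb the standard correlated-Rayleigh deterministic-equivalent framework of~\cite{Hoydis2013,Couillet2011} and one must check that these perturbations preserve the contraction/monotonicity properties that guarantee a well-defined solution; the phase-noise-induced non-Gaussianity of the effective channel (noted in the remark after Proposition~1) also needs care, but it only enters through already-computed second moments, so the concentration arguments still apply.
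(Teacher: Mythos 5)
Your plan follows essentially the same route as the paper's proof: rank-one perturbation via the matrix inversion lemma to decouple $\hat{\bh}_{k,n}$ from $\bSigma$, trace-lemma concentration of the quadratic forms, the deterministic equivalents $\bT$, $\bT^{'}_{k}$, $\bT^{''}_{k}$ from the Wagner/Hoydis framework, the Hadamard-product treatment of the receive-distortion diagonal, and final assembly with cancellation of the $(1+\delta_k)$ factors. The only difference is that you explicitly flag the well-posedness of the fixed-point system, which the paper delegates to the cited random-matrix references rather than re-proving.
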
 
	\proof The proof is provided in Appendix~\ref{theorem1}.\endproof
	
	Theorem \ref{theorem:ULDEMMSE} includes clearly the impact of HWIs in the achievable SE as clearly can be seen. It is a general expression assuming no power allocation. For the sake of exposition, the numerical results below will assume $\rho_{i}=\rho~\forall i $ as described in Section \ref{scalableMMSE1}. 	Furthermore, we would like to mention that the analytical expression for the spectral efficiency has a common complexity compared to works on DEs. Note that Theorem 1 achieves to provide the closed-form SE with MMSE decoding, which otherwise can be studied only by simulation. Moreover, another main advantage of \eqref{theo1} is its deterministic nature, which means that no Monte-Carlo simulations are necessary. Hence, we can study the impact of the hardware impairments very fast and easily. 
	\section{Numerical Results}\label{results}
	This section aims at demonstrating representative numerical examples quantifying the performance of SCF mMIMO networks with HWIs. As a metric for study and comparison, we consider the achievable SEs per UE, given by Theorem~\ref{theorem:ULDEMMSE} and Proposition~\ref{Proposition:Upperbound}. In parallel, Monte-Carlo simulations, conducted for $ 10^{3} $ independent channel realizations, verify the proposed analytical results and show their tightness. Specifically, although the analytical results rely on the assumption that $W,K \to \infty$, they coincide with the simulations even for finite values of $W$ and $K$ being of practical interest.\footnote{This observation is already known in the literature concerning the DEs and supports their usefulness~\cite{Couillet2011,Hoydis2013,Papazafeiropoulos2015a}.} The parameter values, used in the simulation setup below, correspond to a practical scenario to allow the extraction of meaningful conclusions.

	\subsection{Simulation Setup}
	We consider $ M=200 $ APs with $ N=3 $ antennas and $ K=40 $ UEs in an $ 2 \times 2~\mathrm{km}^{2}$ area. Also, we consider the 3GPP Urban Microcell model in~\cite[Table B.1.2.1-1]{3GPP2017} as a proper benchmark for CF mMIMO systems.\footnote{In~\cite[Remark 4]{Bjoernson2019}, it is explained that this propagation model corresponds better to the architecture design of CF mMIMO systems than the established model for CF systems suggested initially in~\cite{Ngo2017}. The main reason is that~\cite{Ngo2017} uses the COST-Hata model, which is suitable for macro-cells with APs being at least
		$ 1~\mathrm{km} $ far from the UEs and at least $ 30~\mathrm{m} $ above the ground. Obviously, these characteristics do not match the CF setting where the APs are very close to the UEs, and possibly, at a lower height. Another important reason is that the model in~\cite{Ngo2017} does not account for shadowing when the UE is closer than $ 50~\mathrm{m} $ from an AP while CF mMIMO systems are more likely suggested for shorter distances.} In particular, according to this model assuming a $ 2~\mathrm{GHz}$ carrier frequency, the large-scale fading coefficient is given by
	\begin{align}
		\beta_{mk}[dB]=-30.5-36.7\log_{10}\left(\frac{d_{mk}}{1~\mathrm{m}}\right)+F_{mk},
	\end{align}
	where $ d_{mk} $ is the distance between AP $ m $ and UE $k $, and $ F_{mk}\sim\mathcal{CN}\left(0,4^{2}\right) $ is the shadow fading. Note that shadowing terms between different UEs are correlated as $\EE\{F_{mk}F_{ij}\}=4^{2}2^{-\delta_{kj}/9} $ when $ m=i $, while if $ m\ne i $, they are uncorrelated. The parameter $ \delta_{kj} $ denotes the distance between UEs $ k $ and $ i $. In addition, we assume that the coherence time and bandwidth are $T_{\mathrm{c}}=2~\mathrm{ms}$ and $B_{\mathrm{c}}=100~\mathrm{kHz}$, respectively, i.e., the coherence block consists of $ 200 $ channel uses with $ \tau_{\mathrm{p}}=20 $ and $ \tau_{\mathrm{u}} =180$. Moreover, all UEs transmit with the same power in both uplink training and transmission phases given by $ \rho=\rho_{\rp}=100~\mathrm{mW} $ while the thermal noise
	variance is $ \sigma^{2}=-174 \mathrm{dBm/Hz}$.
	
	Regarding the HWIs, we assume similar to~\cite{Bjornson2015,Papazafeiropoulos2017,Papazafeiropoulos2019} that the variance of PN is $ \sigma_{i}^{2} =1.58 \cdot 10^{-4}$ by setting $ f_{\mathrm{c}} =2~ \mathrm{GHz}$, $ T_{\mathrm{s}} = 10^{-7} \mathrm{s} $, and $ c_{i}=10^{-17} $ for $ i=\phi,\varphi $ in \eqref{PN1}. Also, we consider an ADC quantizing the received signal to a $b$-bit resolution. In such case, we have $\kappa_{\mathrm{r}}=2^{-b}/\sqrt{1-2^{-2b}}$. For simplicity, we use the same numbers for the transmitter distortion $ \kappa_{\mathrm{t}} $. Note that the trend in 5G networks is to employ low-precision ADCs \cite{Hu2019}. For example, if $ b=2,3,4 $, then $\kappa_{\mathrm{t}}=0.258,~0.126,0.062$. Moreover, we assume that $ \xi=1.6\sigma^{2} $ by considering a low noise amplifier with $\mathcal{F}$ being the noise amplification factor. If $\mathcal{F}=2$ dB and $b=3$ bits, it results in $\xi=\frac{F \sigma^{2}}{1-2^{-2b}}=1.6\sigma^{2}$. Note that the assignment of pilots to the UEs follows the three-step access procedure in \cite{Bjoernson2019a}. The following figures focus on the impact of the PN and additive HWIs in SCF mMIMO systems while the impact concerning the ATN is not shown because it is negligible and due to limited space.

	The SE comparison between the MMSE and PMMSE receivers for different values of $ \rho $, considering the impact of HWIs, is shown in Fig.~\ref{Fig1}. In particular, the receivers, describing the HWIs, are denoted as HA-MMSE and HA-PMMSE, respectively. Note that we include both scalable and conventional versions to show the loss in each case. Also, we provide the HU receivers. Clearly, the proposed HA receivers perform better than the standard MMSE and PMMSE receivers, which are hardware unaware. Hence, the increase of the achievable SE for this simulation setup of HA-MMSE against HU-MMSE is $ 17\% $. In particular, at low SNR due to the low quality of CSI, all decoders perform similarly. As the transmit SNR increases the gap between HA and HU decoders increases because the quality of CSI is improved and the inter-user interference is reduced. As expected, all the rates saturate at high SNR due to the pilot contamination and the power-dependent additive HWIs (see \eqref{eta_tU}, \eqref{eta_rU}). It is shown that the performance loss between scalable and full MMSE is not significant, and the PMMSE version should be preferable given its advantage. Especially at $ 20 $ dB, the loss for the HA-MMSE receiver is $ 8 \%$. In addition, we depict the upper bound on the capacity for both scalable and unscalable MMSE decoders. In the same figure and for reference, we have provided the cases of MRC with/without HWIs, and ideal MMSE and PMMSE when perfect hardware is assumed. The latter case presents the best performance but does not take into account the inevitable HWIs.
	\begin{figure}[!h]
		\begin{center}
			\includegraphics[width=0.95\linewidth]{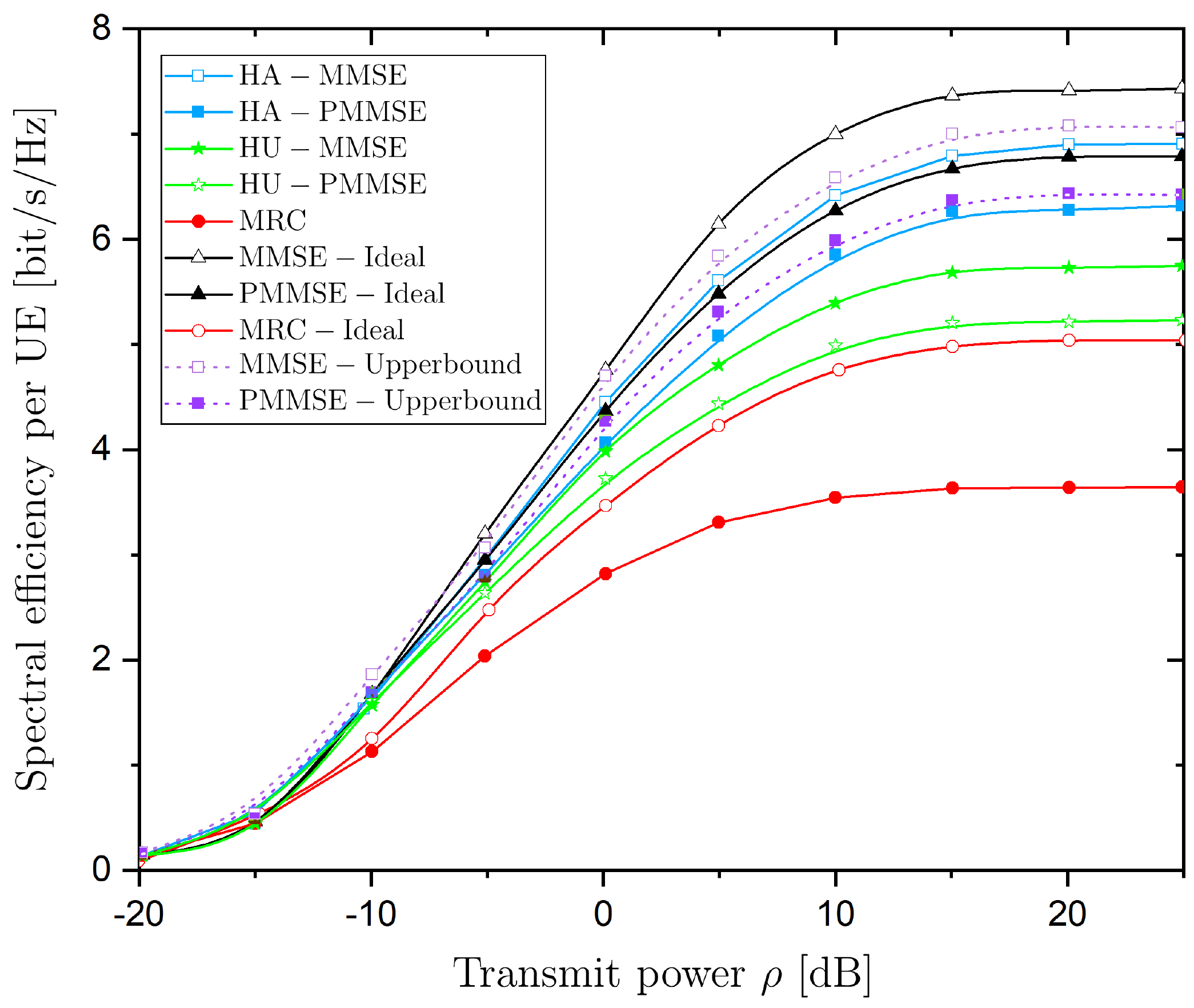}\vskip -5mm
			\caption{\footnotesize{Uplink achievable SE per UE versus the power $\rho$ for SCF mMIMO systems with HA and HU MMSE and PMMSE decoding ($\sigma_{i}^{2} =1.58 \cdot 10^{-4}$ for $ i=\phi,\varphi $, $ \kappa_{\mathrm{t}}=\kappa_{\mathrm{r}}=0.126 $, and $ \xi=1.6\sigma^{2} $).}}
			\label{Fig1}
		\end{center}
	\end{figure}
	\begin{figure}[!h]
		\begin{center}
			\includegraphics[width=0.95\linewidth]{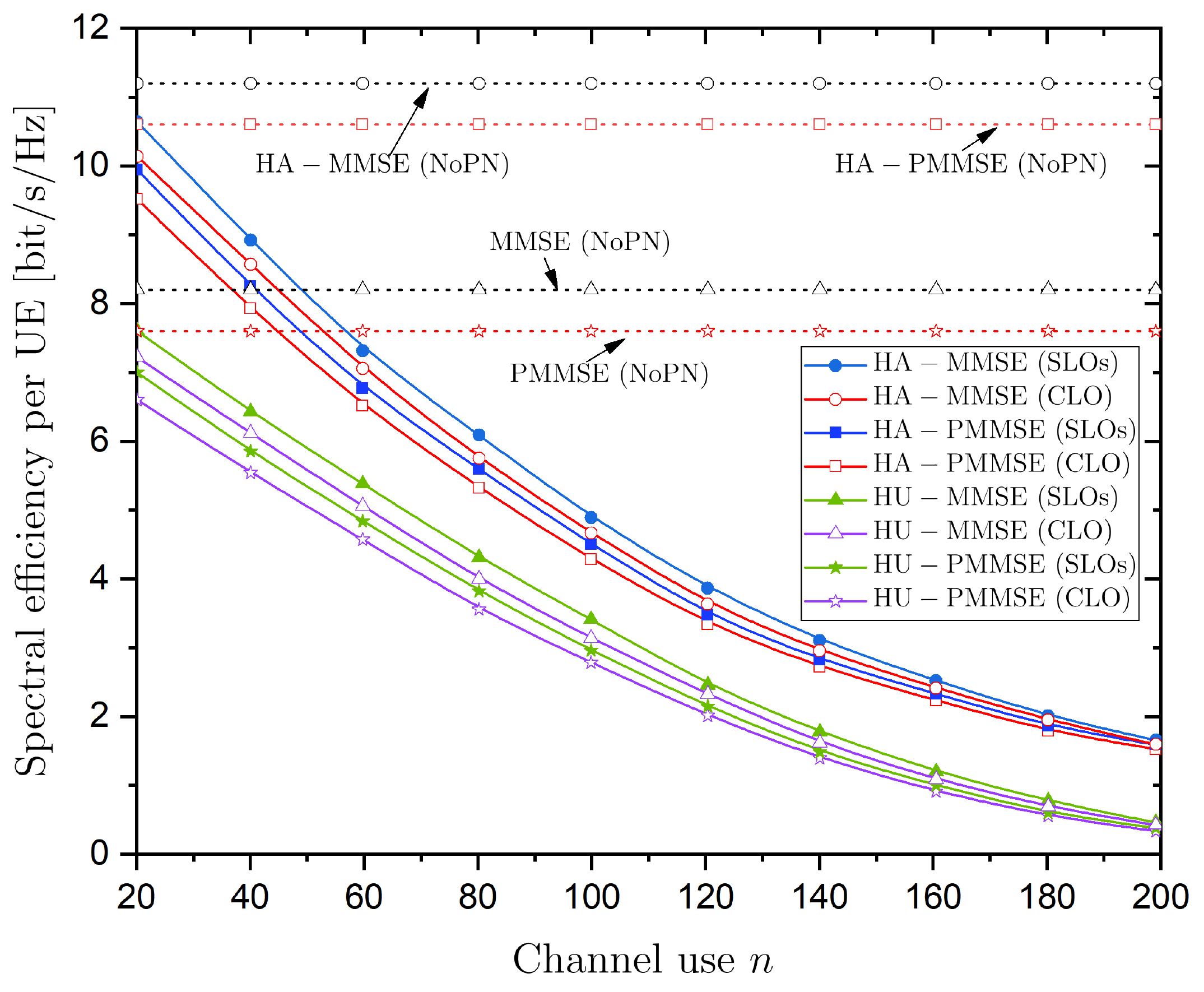}\vskip -5mm
			\caption{\footnotesize{Uplink achievable SE per UE versus the channel use $n$ of the data transmission phase for SCF mMIMO systems with HA and HU MMSE and PMMSE decoding ($\sigma_{i}^{2} =1.58 \cdot 10^{-4}$ for $ i=\phi,\varphi $, $ \kappa_{\mathrm{t}}=\kappa_{\mathrm{r}}=0 $, and $ \xi=1.6\sigma^{2} $).}}
			\label{Fig2}
		\end{center}
	\end{figure}
	\begin{figure}[!h]
		\begin{center}
			\includegraphics[width=0.95\linewidth]{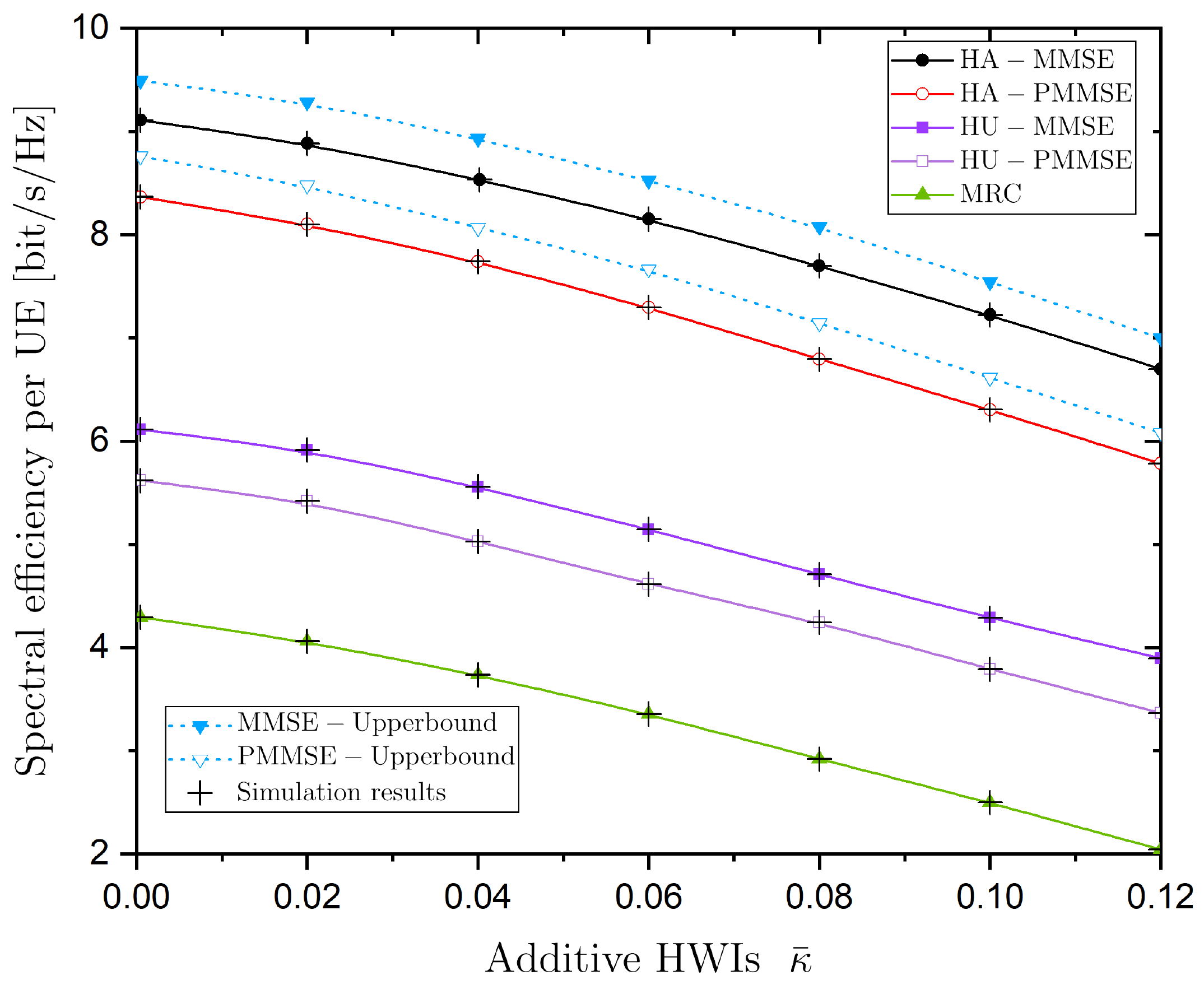}\vskip -5mm
			\caption{\footnotesize{Uplink achievable SE per UE versus $ \bar{\kappa} $ for SCF mMIMO systems with HA and HU MMSE and PMMSE decoding ($\sigma_{i}^{2} =0$ for $ i=\phi,\varphi $, $ \kappa_{\mathrm{t}}=\kappa_{\mathrm{r}}=0.126 $, and $ \xi=1.6\sigma^{2} $).}}
			\label{Fig3}
		\end{center}
	\end{figure}
	\begin{figure}[!h]
		\begin{center}
			\includegraphics[width=0.95\linewidth]{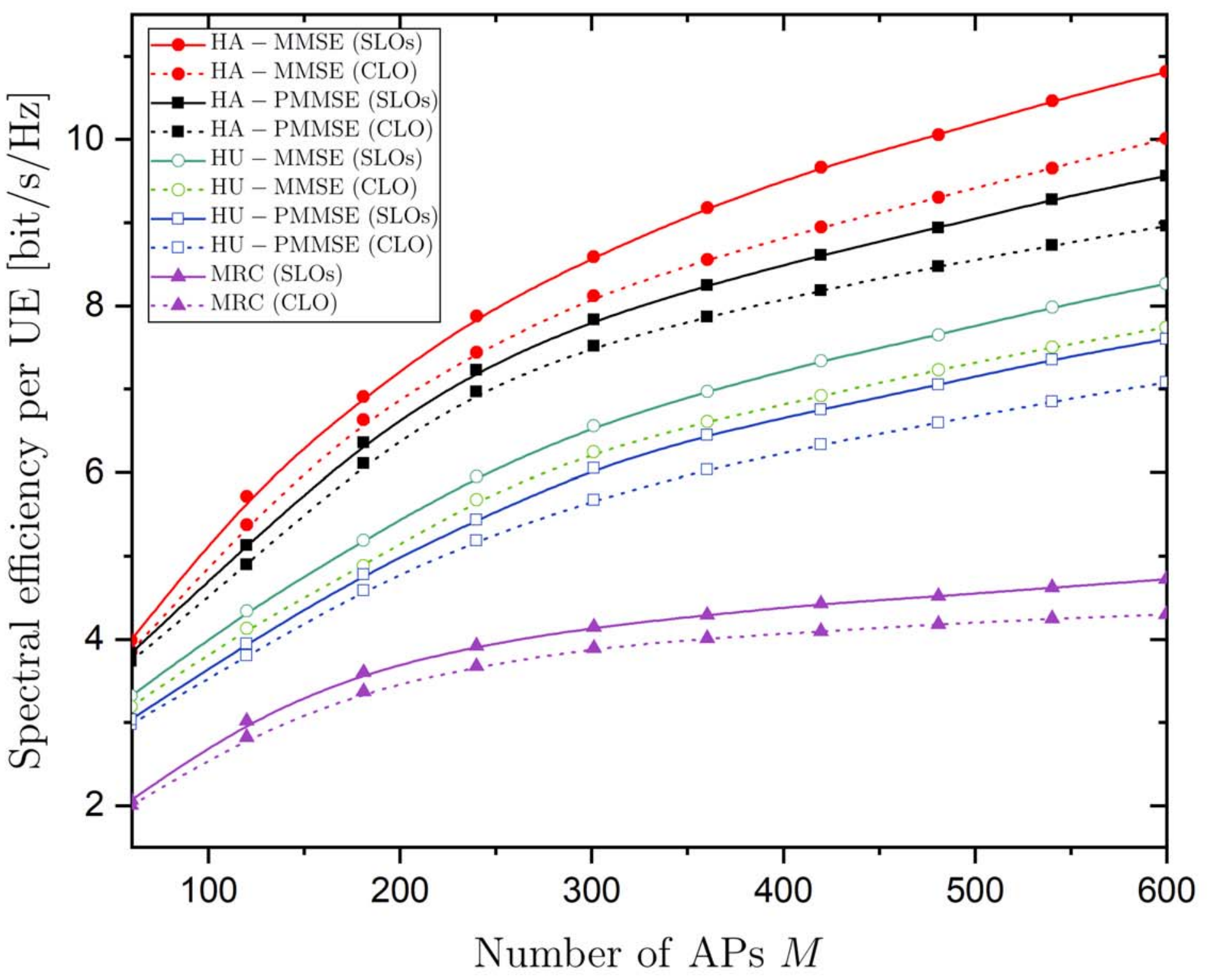}\vskip -5mm
			\caption{\footnotesize{Uplink achievable SE per UE versus the number of APs $M$ for SCF mMIMO systems with HA and HU MMSE and PMMSE decoding ($\sigma_{i}^{2} =1.58 \cdot 10^{-4}$ for $ i=\phi,\varphi $, $ \kappa_{\mathrm{t}}=\kappa_{\mathrm{r}}=0 $, and $ \xi=1.6\sigma^{2} $).}}
			\label{Fig4}
		\end{center}
	\end{figure}
	\begin{figure}[!h]
		\begin{center}
			\includegraphics[width=0.95\linewidth]{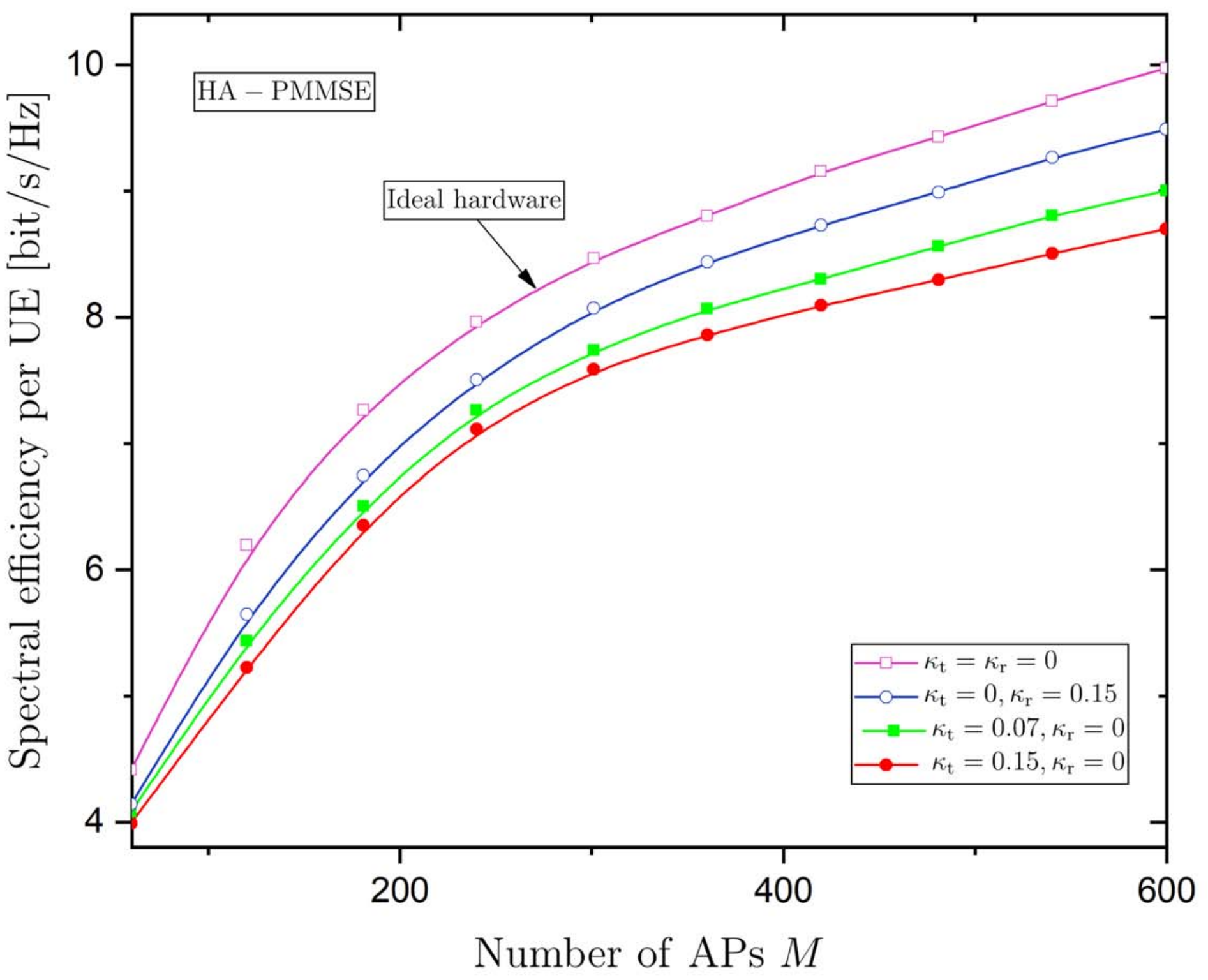}\vskip -5mm
			\caption{\footnotesize{Uplink achievable SE per UE versus the number of APs $M$ for SCF mMIMO systems with HA and HU MMSE and PMMSE decoding ($\sigma_{i}^{2} =1.58 \cdot 10^{-4}$ for $ i=\phi,\varphi $, $ \kappa_{\mathrm{t}}=\kappa_{\mathrm{r}}=0.126 $, and $ \xi=1.6\sigma^{2} $).}}
			\label{Fig5}
		\end{center}
	\end{figure}

	
	Keeping the additive HWIs equal to zero, Fig.~\ref{Fig2} illustrates the variation of the achievable SE per user versus the channel use of the data transmission phase.
	In particular, in Fig.~\ref{Fig2}, the achievable SE decreases as the number of channel uses increases since the aggregate detrimental contribution from PN becomes higher. We also depict the SEs of MMSE and PMMSE receivers and their HA versions with no PN. These are constant with respect to $ n $ since they do not depend on the PN being the only source of channel aging in this work. 
	Next, in both HA-MMSE and HA-PMMSE receivers, the SLOs configuration outperforms the CLO design 
	since the distortions from different LOs average out in the large system limit as described by \cite{Bjornson2015,Papazafeiropoulos2016}. 
	Moreover, the SEs for both HA-MMSE and HA-PMMSE receivers diminish with increasing $ n $ since the phase drift becomes higher and the corresponding impact of PN becomes quite destructive. 
	At the end of the data transmission, where $ n=200 $, the degradation is so heavy that the performance with scalable and conventional CF mMIMO systems becomes almost identical for both cases of HA and HU decoders.

	Fig.~\ref{Fig3} provides the performance of the achievable SE versus $ \bar{\kappa} $, where $ \kappa_{\mathrm{t}}=\bar{\kappa} $ and $ \kappa{\mathrm{r}}=\bar{\kappa} +0.03 $, while the effect of PN has not been considered to focus on the impact of the additive HWIs. The impact of PN is studied in other figures. The figure starts from the case of no additive HWIs at the BS when $ \bar{\kappa}=0 $ and ends with severe additive HWIs. Obviously, the higher the additive HWIs, the higher the degradation of the system performance becomes. Furthermore, it is evident that the practical SCF mMIMO systems perform very well since the proposed HA-PMMSE achieves just $ 9\% $ less SE with HA-MMSE receivers when $ \bar{\kappa} =0.06$. When it comes to the HU decoders, the rates worsen with increasing the additive HWIs as well. Regarding the difference between the HA-PMMSE and PMMSE decoders is significant since it is approximately $ 2.6 $ bit/s/Hz for all values of $ \bar{\kappa} $. Also, we illustrate the superior performance of MMSE-style decoders against the MRC decoder as anticipated. Moreover, we provide the MMSE decoder given by Proposition \ref{Proposition:Upperbound}. The provided tightness of the DE analytical expressions with respect to the simulation results, represented by the "cross" symbol, verifies Theorem~\ref{theorem:ULDEMMSE}.
	
	Fig.~\ref{Fig4} compares the SEs versus the number of APs $ M $ in the HA and HU cases for both scalable and conventional CF mMIMO systems. We consider the impact of PN while the additive HWIs are not taken into account. In all cases, $ \mathrm{\overline{SE}}_{k}^{\mathrm{lo}} $ increases with $ M $. Moreover, we observe that SLOs achieve a better SE than the CLO design. Actually, the performance gap between the SLOs and CLO design increases with $ M $ because the phase drifts are independent and in the large system limit, they are averaged, which is a benefit paid as a trade of a higher deployment cost \cite{Bjornson2015,Papazafeiropoulos2017a}.

	Fig.~\ref{Fig5} depicts the impact of the additive HWIs on SCF mMIMO systems with a focus on the proposed HA-PMMSE combiner by showing the average achievable SE per UE versus the number of APs while the PN is assumed negligible. Obviously, the achievable SE increases $ M $ but the additive HWIs at the APs side and UE sides behave differently. Of course, the presence of HWIs degrades the performance. Specifically, the additive HWIs at the transmit side, i.e., at the UE, have a more significant impact on the SE. For example, at $ M=400 $ APs, the loss due to the receive distortion is only $ 4\% $, while in the case of the transmit distortion, the degradation is $ 11\% $. Hence, the quality of hardware at the transmitter side should be considered more during the design.

	\section{Conclusion} \label{Conclusion} 
	In this paper, we investigated the impact of HWIs in SCF mMIMO systems, where the complexity at each AP is finite even when the number of UEs increases to infinity. Specifically, we introduced a general model with HWIs during the channel estimation, and we showed how HWIs modify the CSI. Moreover, we derived the HA-PMMSE combiner considering both HWIs and scalability design issues. Next, we obtained upper and lower bounds on the SE. In particular, we derived the DE of the achievable uplink SE, carrying the impact of HWIs, and we demonstrated the realistic performance of SCF mMIMO systems by varying the fundamental system parameters. This allowed us to extract insightful conclusions as guidelines for the practical implementation of these systems. Among the observations, we would like to highlight that the HA receivers outperform those which are hardware unaware of the presence of both additive HWIs and PN. Also, PMMSE should be preferred due to its advantages and because its performance loss is negligible compared to full MMSE not only in ideal scenarios but in realistic conditions where HWIs exist. Furthermore, we showed that even in SCF mMIMO systems, the HWIs result in ceilings at the SE, and that SCF mMIMO carries its benefits even under practical conditions. Future works should take into account the effect of limited capacity fronthaul links. Another interesting direction could be the extension of this work to full-duplex systems where self-interference arises as another hardware impairment \cite{Zhang2020}. \begin{appendices}
		\section{Proof of Proposition~\ref{Proposition:Upperbound}}\label{Proposition2}
		We first assume that a genie is providing the received with perfect CSI and with the value of the multi-user interference term, which can then be subtracted from the received signal. What remains is a channel where the sum of the transmit and receive distortion, and noise is an independent circularly symmetric complex Gaussian distributed signal. As in~\cite{Telatar1999}, it follows that Gaussian signaling is optimal. An upper bound on the uplink capacity is
		\begin{align}
			R_{k}=\frac{1}{\tau_{\mathrm{p}}}\!\!\sum_{n=1}^{\tau_{\mathrm{c}}-\tau_{\mathrm{p}}}\!\EE\left\{\max_{\bv_{k,n}:\|\bv_{k,n}\|=1}\log_{2}\left (1+\gamma_{k,n}\left (\bv_{k,n}\right )\right )\right\}\!\!,\label{eq46}
		\end{align}
		where we have obtained the upper bound for each $ n $ in the transmission phase, and then, we have taken the average over these bounds~\cite{Pitarokoilis2015}. The uplink SINR in \eqref{eq46} is
		\begin{align}
			\!\!\gamma_{k,n}\!\left( \bv_{k,n} \right) \!=\!\frac{\rho_{k}\bv_{k,n}^{\H}\bD_{k}{\bh}_{k,n}{\bh}_{k,n}^{\H}\bD_{k}\bv_{k,n}}{\bv_{k,n}^{\H}\bD_{k}\tilde{\bSigma}^{\dagger}\bD_{k}\bv_{k,n}},\label{upper2}
		\end{align}
		where $ \tilde{\bSigma} $, $ \bkappa_{\mathrm{r}}^{2}$ and $ \bF_{|\bh_{i}|^{2}}$ are defined in Proposition \ref{Proposition:Upperbound}. We have considered the pseudo-inverse of $ \tilde{\bSigma} $ instead of its inverse because the terms inside the parentheses may not be strictly positive definite \cite{Bjoernson2019a}.

		The maximization of the achievable rate is obtained by noticing that the logarithm is a monotonically increasing function. Indeed, the optimization concerns a generalized Rayleigh quotient problem which is solved by~\eqref{OptimalV} and achieves the upper bound given by~\eqref{Upper1} after plugging the maximizing combiner into~\eqref{eq46}.
		
		\section{Proof of Lemma~\ref{MSElemma}}\label{Lemma3}
		The MSE definition between the estimated received signal, given by~\eqref{estimatedSignal}, and the transmit signal during the training phase $ \left(n \in \{0,\tau\}\right)$ gives
		\begin{align}
			&\mathrm{MSE}_{k,n}=\EE\left\{\|\hat{s}_{k,n}-{s}_{k,n}\|^{2}_{2}\big|\hat{\bH}_{n}\right\}\\
			&=\tr\left(\EE\left\{\left(\hat{s}_{k,n}-{s}_{k,n}\right)\left(\hat{s}_{k,n}-{s}_{k,n}\right)^{\H}\big|\hat{\bH}_{n}\right\}\right)\\
			&=\tr\Bigg({{\bv}_{k,n}^{\H}\bD_{k}}{}\Big(\displaystyle\sum_{i=1}^{K}\rho_{i}\left (1+\kappa_{\mathrm{t}}^{2}\right )\!\!\left(\hat{\bh}_{i,n}\hat{\bh}_{i,n}^{\H}+\tilde{\bR}_{i}\right)\nn\\&+\bkappa_{\mathrm{r}}^{2}\sum_{i=1}^{K}\rho_{i}\left( \bF_{|\hat{\bh}_{i}|^{2}}+\bF_{|{\tilde{\bR}}_{i}|^{2}}\right)+\xi\Id_{W} \Big)\bD_{k}{\bv}_{k,n}\nn\\
			&-\rho_{k}{\bv}_{k,n}^{\H}\bD_{k}\hat{\bh}_{k,n}-\rho_{k}\hat{\bh}_{k,n}^{\H}\bD_{k}{\bv}_{k,n}+\rho_{k}\Big), 
		\end{align}
		where we have considered that $ \tr \!\left(\!\EE\!\left\{\! {\bh}_{i,n}{\bh}_{i,n}^{\H}\big|\hat{\bh}_{i,n}\!\right\}\!\right)\!=\!\tr\!\left(\!\hat{\bh}_{i,n}\hat{\bh}_{i,n}^{\H}\!+\!\tilde{\bR}_{i}\!\right)$ with $\tilde{\bR}_{i}\!=\! \EE\!\left\{\! \tilde{\bh}_{i,n}\tilde{\bh}_{i,n}^{\H}|\hat{\bh}_{i,n}\!\right\} $.
		
		\section{Proof of Proposition~\ref{LowerBound1}}\label{Proposition4}
		First, we follow the approach in~\cite{Pitarokoilis2015} to compute the average achievable SE for each $ n $ in the transmission phase, and then, we obtain the average over these SEs. The achievable SE per UE in~\eqref{LowerBound} is obtained by taking into account the Gaussianity of the input symbols and by making a worst-case assumption regarding the computation of the mutual information~\cite[Theorem $1$]{Hassibi2003}, where the inter-user interference and the distortion noises are treated as independent Gaussian noise. Moreover, based on \cite{Medard2000}, the receiver treats the channel as deterministic with the gain $ \EE\{\bv_{k,n}^{\H}\bD_{k}\bTheta_{k,n}{\bh}_{k}\} $ in the detection. In contrast, the
		deviation from the average effective
		channel gain is treated as worst-case Gaussian noise. Thus, we obtain
		$ \tilde{\gamma}^{\mathrm{lo}} $, where the expectation operator in the various terms is taken concerning the channel
		vectors as well as the noise processes, which concludes the proof.	
		\section{Proof of Theorem~\ref{theorem:ULDEMMSE}}\label{theorem1}
		First, we obtain the DE of the desired signal power given by~\eqref{sig11}. Specifically, we have
		\begin{align}
			\bv_{k,n}^{\H}\bD_{k}{\bh}_{k,n}&=\hat{\bh}_{k,n}^{\H}\bD_{k}{\bSigma}\bD_{k} {\bh}_{k,n}\label{sig1}\\
			&=\frac{\frac{1}{W}\hat{\bh}^\H_{k,n} \bD_{k}{\bSigma}_k \bD_{k}\left(\hat{\bh}_{k,n}+\tilde{\bh}_{k,n} \right)}{1+\frac{q}{W}{\hatvh}^\H_{k,n}\bD_{k}{\bSigma}_k\bD_{k} {\hatvh}_{k,n} }\label{sig2}\\
			&\asymp\frac{\frac{1}{W}\hat{\bh}^\H_{k,n}\bD_{k} {\bSigma}_k\bD_{k} \hat{\bh}_{k,n} }{1+\frac{q}{W}{\hatvh}^\H_{k,n}\bD_{k}{\bSigma}_k\bD_{k}{\hatvh}_{k,n} }\label{sig3},
		\end{align}
		where in~\eqref{sig1}, we have replaced the expression of the MMSE decoder given by~\eqref{OptimalMMSE2}. 
		In the next equation, we have applied the matrix inversion lemma and used the orthogonality between the channel and its estimated version. Note that $ q=\left (1+\kappa_{\mathrm{t}_\mathrm{UE}}^{2}\right )\rho_{k} $ while $ {\bSigma}_k^{\dagger} $ is defined as
		\begin{align}
			{\bSigma}_k^{\dagger}\! &=\!{\bSigma}^{\dagger}-\frac{q}{W}\bD_{k}\hat{\bh}_{k,n}\hat{\bh}_{k,n}^{\H}\bD_{k}\\
			&\!=\!\displaystyle\bD_{k}\!\!\left(\sum_{\substack{i \in \mathcal{P}_{k}\nn\\
					i \ne k}}\!\!\frac{q}{W}\hat{\bh}_{i,n}\hat{\bh}_{i,n}^{\H}\!+\!\frac{\bkappa_{\mathrm{r}}^{2}}{W}\sum_{i \in \mathcal{P}_{k}}\!\rho_{i}\bF_{|\hat{\bh}_{i}|^{2}}\!+\!{\al \xi}\Id_{W}\!\!\right)\!\!\bD_{k}\!+\!\tilde{\bDelta}.\nn
		\end{align}
		In the numerator of~\eqref{sig3}, we have applied \cite[Lem. B.26]{Bai2010}. We continue with the use of \cite[Lem. 14.3]{Bai2010} known as rank-1 perturbation lemma, \cite[Lem. B.26]{Bai2010}, and \cite[Theorem 1]{Wagner2012} as\footnote{It is worthwhile to mention that the diagonal matrix $ \bF_{|\hat{\bh}_{i}|^{2}} $ inside the MMSE decoder is considered a deterministic matrix with entries in the diagonal elements the limits of the individual diagonal elements~\cite{Papazafeiropoulos2017a}. In particular, exploiting the uniform convergence $ \lim \sup_W \max_{1\le i\le W} { \left|\left[\hat{\bh}_{i}\hat{\bh}_{i}^{\H} \right]_{ww} - \left[\hat{\bPhi}_{i}\right]_{ww}\right| } = 0$, we have $\left\|\frac{1}{W} \mathrm{diag}(\hat{\bh}_{i} \hat{\bh}_{i}^{H})-\frac{1}{W}\tr\left(\diag\left(\hat{\bPhi}_{i}\right)\right)\right\| \xrightarrow[ W\rightarrow \infty]{\mbox{a.s.}} 0 $.}
		\begin{align}
			\bv_{k,n}^{\H}\bD_{k}\bh_{k,n}
			&\asymp \frac{\frac{1}{W}\tr \left( \bD_{k}\bPhi_{k}\bD_{k}\bT\right)}{1+\frac{q}{W}\tr\left(\bD_{k}\bPhi_{k}\bD_{k}\bT\right)}\\
			&=\frac{\tilde{\delta}_{k}}{1+\delta_{k}},\label{desired2}
		\end{align}
		where $ \tilde{\delta}_{k} =\frac{1}{W}\tr\bD_{k} \bPhi_{k}\bD_{k}\bT$ and $ \delta_{k}=\frac{q}{W}\tr\bD_{k}\bPhi_{k}\bD_{k}\bT $.
		The term, concerning the deviation from the average effective
		channel gain, becomes
		\begin{align}
			&			\!\!\mathrm{Var}\left\{\bv_{k,n}^{\H}\bD_{k}{\bh}_{k,n}\right\}\asymp\frac{\frac{1}{W}\EE \left\{|\hat{\bh}^\H_{k,n}\bD_{k} {\bSigma}_k\bD_{k} {\bh}_{k,n}|^{2}\right\}}{\left(1+\delta_{k}\right)^{2}}\label{Int8_1}\\
			&~~~\asymp\frac{\EE \left\{\frac{1}{W^{2}}\tr \left(\bD_{k}\bPhi_{k} \bD_{k}{\bSigma}_k \bD_{k}\left( \bR_{k} - \bPhi_{k}\right)\bD_{k}\bSigma_{k}\right)\right\}}{\left(1+\delta_{k}\right)^{2}}\label{Int8_3}\\
			&~~~\asymp\frac{\EE \left\{\frac{1}{W^{2}}\tr \left(\bD_{k}\left( \bR_{k} - \bPhi_{k}\right)\bD_{k}\bT^{'}_{k}\right)\right\}}{\left(1+\delta_{k}\right)^{2}},\label{Int8_4}
		\end{align}
		where in \eqref{Int8_1}, we have used the matrix inversion lemma, \cite[Lem. B.26]{Bai2010}, and \cite[Theorem 1]{Wagner2012}. In \eqref{Int8_3}, we have applied the rank-1 perturbation lemma, \cite[Lem. B.26]{Bai2010}, and \cite[Lem. 10]{Krishnan2015}. and \cite[Lem. B.26]{Bai2010} again. The last step includes application of \cite[Theorem 2]{Hoydis2013}.\\
		Based on~\eqref{int1}, the interference power of the $ k $th UE is written as
		\begin{align}
			&|\bv_{k,n}^{\H}\bD_{k}{\bh}_{i,n}|^{2}\asymp \left|\frac{\frac{1}{W}\hat{\bh}^\H_{k,n}\bD_{k}{\bSigma}_k\bD_{k} {\bh}_{i,n} }{1+\delta_{k} }\right|^{2}\label{Int2_1}\\
			&\asymp \frac{\hat{\bh}^\H_{k,n} \bD_{k}{\bSigma}_k \bD_{k}{\bh}_{i,n} \bh_{i,n}^{\H}\bD_{k} {\bSigma}_k\bD_{k}\hat{\bh}_{k,n} }{W^{2}\left(1+\delta_{k}\right)^{2}}\label{Int2_2}\\
			&\asymp \frac{{\bh}^\H_{i,n}\bD_{k} {\bSigma}_k\bD_{k}\bPhi_{k}\bD_{k}{\bSigma}_k \bD_{k}{\bh}_{i,n} }{W^{2}\left(1+\delta_{k}\right)^{2}}\label{Int2_3}\\
			&\!\asymp\!\frac{ 1}{\left(1\!+\!\delta_{k}\right)^{2}}\bigg(\!\!\frac{1}{W^{2}}{\bh}^\H_{i,n}\bD_{k} {\bSigma}_{ki}\bD_{k}\bPhi_{k}\bD_{k}{\bSigma}_{ki}\bD_{k} {\bh}_{i,n}\!\nn\\&+\!\frac{|{\bh}^\H_{i,n} \bD_{k}{\bSigma}_{ki}\bD_{k}\hat{\bh}_{i,n}|^{2}\hat{\bh}^\H_{i,n}\bD_{k} {\bSigma}_{ki}\bD_{k}\bPhi_{k}\bD_{k}{\bSigma}_{ki}\bD_{k} \hat{\bh}_{i,n}}{W^{4}\left(1+\delta_{i}\right)^{2}}\nn\\
			&-2\mathrm{Re}\bigg\{
			\frac{\left(\hat{\bh}^\H_{i,n}\bD_{k}{\bSigma}_{ki}\bD_{k}{\bh}_{i,n}\right)}{W^{3}\left(1+\delta_{i}\right)^{2}} \left({\bh}^\H_{i,n} \bD_{k}{\bSigma}_{ki}\bD_{k}\bPhi_{k}\bD_{k}{\bSigma}_{ki}\bD_{k} \hat{\bh}_{i,n}\right)\!\!\!\bigg\}\!\bigg)\label{Int2_4}\\
			&\asymp \frac{1}{\left(1+\delta_{k}\right)^{2}}\bigg(\zeta_{ki}+\frac{|\nu_{ki}|^{2}\mu_{ki}}{\left(1+\delta_{i}\right)^{2}}-2\mathrm{Re}\left\{
			\frac{\nu_{ki}^{*}\mu_{ki}}{\left(1+\delta_{i}\right)}\right\}\bigg),\label{Int2_5}
		\end{align}
		where we have applied the matrix inversion lemma in~\eqref{Int2_1}, while in~\eqref{Int2_2} and \eqref{Int2_3}, we have used \cite[Lem. B.26]{Bai2010}. In \eqref{Int2_4}, we have applied again the matrix inversion lemma, and in the last step, we have used the rank-1 perturbation lemma, \cite[Lem. B.26]{Bai2010}, \cite[Theorem 1]{Wagner2012}, and \cite[Theorem 2]{Hoydis2013}. The definitions of the various parameters are given in the presentation of the theorem. 
		Below, the terms, corresponding to the transmit and
		receive distortions, are derived. Specifically, we have
		\begin{align}
			&\sigma_{\mathrm{t}}^{2}= \sum_{i=1}^{K}\kappa_{\mathrm{t}}^{2}\rho_{i}\EE\left\{|\bv_{k,n}^{\H}\bD_{k}\bh_{i,n}|^{2} \right\}\nn\\
			&=\kappa_{\mathrm{t}}^{2}\left(\rho_{k}\EE\left\{|\bv_{k,n}^{\H}\bD_{k}\bh_{k,n}|^{2}\right\}+\sum_{i\ne k}^{K}\rho_{i}\EE\left\{|\bv_{k,n}^{\H}\bD_{k}\bh_{i,n}|^{2}\right\}\right)\nn\\
			& \asymp\! \frac{\kappa_{\mathrm{t}}^{2}}{\left(1\!+\!\delta_{k}\right)^{2}}\!\!\left(\!\rho_{k}{\tilde{\delta}^{2}_{k}}{}\!+\!\sum_{i\ne k}^{K}\!\rho_{i}\!\!\left(\!\zeta_{ki}\!+\!\frac{|\nu_{ki}|^{2}\mu_{ki}}{\left(1\!+\!\delta_{i}\right)^{2}}\!-\!2\mathrm{Re}\!\left\{\!
			\frac{\nu_{ki}^{*}\mu_{ki}}{\left(1\!+\!\delta_{i}\right)}\!\right\}\!\!\right)\!\!\right)\!\!,\label{Int6}
		\end{align}
		where, in \eqref{Int6}, we have substituted \eqref{desired2} and \eqref{Int2_5}.
		Also, the deterministic $ \sigma_{\mathrm{r}}^{2} $ as $ W \to \infty $ becomes
		\begin{align}
			\!\!\!\sigma_{\mathrm{r}}^{2}\!&=\!
			\EE\left\{\bv_{k,n}^{\H}\bD_{k}\bkappa_{\mathrm{r}}^{2}\!\left(\Id_{W}\circ\bH_{n}\bP \bH^{\H}_{n}\right)\bD_{k} \bv_{k,n}\right\}\label{sigmar}\\
			&\!\!\!\asymp \!\frac{\bkappa_{\mathrm{r}}^{2}\EE\left\{\tr\left(\bD_{k} \left(\Id_{W}\circ \bH_{n} \bP\bH^{\H}_{n} \right)\bD_{k}\bSigma_{k}\bD_{k}\bPhi_{k}\bD_{k}\bSigma_{k}\right)\right\}}{W^{2}\left(1+\delta_{k}\right)^{2}}\nn\\
			&\!\!\!\asymp\! \frac{\eta_{k}^{'}}{W^{2}\left(1+\delta_{k}\right)^{2}}\sum_{i=1}^{K}\rho_{i}\tr\left( \bD_{k}^{2}\bkappa_{\mathrm{r}}^{2}\bR_{i}\right)\!,\label{Int7_2}
		\end{align}
		where in~\eqref{sigmar} we have written the diagonal matrix in terms of a Hadamard product. Next, we have exploited the freeness between $\bv_{k,n}\bv_{k,n}^{\H} $ and the diagonal matrix $\Id_{W}\circ\bH \bH^{\H} $. Also, we have applied \cite[Lem. B.26]{Bai2010}\cite[Theorem 1]{Wagner2012}, \cite[Theorem 2]{Hoydis2013}, and~\cite[p. 207]{Tao2012} while we have set $ \eta^{'}_{k}=\frac{1}{W}\tr\bT^{'}_{k}$. 
		The DE of the last term of~\eqref{int1}, corresponding to the ATN contribution, becomes
		\begin{align}
			\|\bv_{k,n}^{\H}\bD_{k}\|^{2}&\asymp\frac{\frac{1}{W^{2}}\hat{\bh}_{k,n}^{\H}\bD_{k}\bSigma_{k}\bD_{k}^{2}\bSigma_{k}\bD_{k}\hat{\bh}_{k,n}}{\left(1+\delta_{k}\right)^{2}}\label{Int4_1}\\
			&\asymp\frac{\frac{1}{W^{2}}\tr\left(\bSigma_{k}\bD_{k}^{2}\bSigma_{k}\bD_{k}\bPhi_{k}\bD_{k}\right)}{\left(1+\delta_{k}\right)^{2}}\label{Int4_2}\\
			&=\frac{\frac{1}{W^{2}}\tr\left(\bT^{''}_{k}\right)}{\left(1+\delta_{k}\right)^{2}}\label{Int4_3},
		\end{align}
		where in~\eqref{Int4_1} and~\eqref{Int4_2}, we have applied the matrix inversion lemma and \cite[Lem. B.26]{Bai2010}, respectively. In the last step, we have used the rank-1 perturbation lemma as well as \cite[Theorem 2]{Hoydis2013} and \cite[Theorem 1]{Wagner2012}. 
		By substituting \eqref{desired2}, \eqref{Int8_4}, \eqref{Int2_5}, \eqref{Int6}, \eqref{Int7_2}, and \eqref{Int4_3} into \eqref{sig11} and \eqref{int1}, the DE SINR is derived and the proof is concluded.

	\end{appendices}
	\bibliographystyle{IEEEtran}
	
	\bibliography{mybib}
	\bibliography{mybib}

\begin{thebibliography}{10}
\providecommand{\url}[1]{#1}
\csname url@samestyle\endcsname
\providecommand{\newblock}{\relax}
\providecommand{\bibinfo}[2]{#2}
\providecommand{\BIBentrySTDinterwordspacing}{\spaceskip=0pt\relax}
\providecommand{\BIBentryALTinterwordstretchfactor}{4}
\providecommand{\BIBentryALTinterwordspacing}{\spaceskip=\fontdimen2\font plus
\BIBentryALTinterwordstretchfactor\fontdimen3\font minus
  \fontdimen4\font\relax}
\providecommand{\BIBforeignlanguage}[2]{{%
\expandafter\ifx\csname l@#1\endcsname\relax
\typeout{** WARNING: IEEEtran.bst: No hyphenation pattern has been}%
\typeout{** loaded for the language `#1'. Using the pattern for}%
\typeout{** the default language instead.}%
\else
\language=\csname l@#1\endcsname
\fi
#2}}
\providecommand{\BIBdecl}{\relax}
\BIBdecl

\bibitem{Papazafeiropoulos2020}
A.~K. {Papazafeiropoulos} \emph{et~al.}, ``Scalable cell-free massive {MIMO}
  systems with hardware impairments,'' in \emph{IEEE 31st Annual International
  Symposium on Personal, Indoor and Mobile Radio Communications}, May 2020, pp.
  1--7.

\bibitem{Marzetta2010}
T.~Marzetta, ``Noncooperative cellular wireless with unlimited numbers of base
  station antennas,'' \emph{IEEE Trans. Wireless Commun.}, vol.~9, no.~11, pp.
  3590--3600, Nov. 2010.

\bibitem{Marzetta2016}
T.~L. Marzetta \emph{et~al.}, \emph{Fundamentals of Massive MIMO}.\hskip 1em
  plus 0.5em minus 0.4em\relax Cambridge University Press, 2016.

\bibitem{massivemimobook}
E.~Bj\"{o}rnson, J.~Hoydis, and L.~Sanguinetti, ``Massive {MIMO} networks:
  {Spectral}, energy, and hardware efficiency,'' \emph{Foundations and
  Trends{\textregistered} in Signal Processing}, vol.~11, no. 3-4, pp.
  154--655, 2017.

\bibitem{Ngo2017}
H.~Q. Ngo \emph{et~al.}, ``Cell-free massive {MIMO} versus small cells,''
  \emph{IEEE Trans. Wireless Commun.}, vol.~16, no.~3, pp. 1834--1850, Mar.
  2017.

\bibitem{Bjoernson2019}
E.~Bj{\"o}rnson and L.~Sanguinetti, ``Making cell-free massive {MIMO}
  competitive with {MMSE} processing and centralized implementation,''
  \emph{IEEE Trans. Wireless Commun.}, vol.~19, no.~1, pp. 77--90, Jan. 2020.

\bibitem{Papazafeiropoulos2020a}
A.~Papazafeiropoulos \emph{et~al.}, ``Performance analysis of cell-free massive
  {MIMO} systems: {A} stochastic geometry approach,'' \emph{IEEE Trans. Veh.
  Tech.}, vol.~69, no.~4, pp. 3523--3537, 2020.

\bibitem{Papazafeiropoulos2021}
A.~K. Papazafeiropoulos \emph{et~al.}, ``Multipair two-way {DF} relaying with
  cell-free massive {MIMO},'' \emph{IEEE Open J. Commun. Soc.}, vol.~2, pp.
  423--438, 2021.

\bibitem{Papazafeiropoulos2021a}
A.~Papazafeiropoulos \emph{et~al.}, ``Towards optimal energy efficiency in
  cell-free massive {MIMO} systems,'' \emph{IEEE Trans. Green Commun. Net.},
  vol.~5, no.~2, pp. 816--831, 2021.

\bibitem{Buzzi2017a}
S.~Buzzi and C.~D'Andrea, ``Cell-free massive {MIMO}: {U}ser-centric
  approach,'' \emph{IEEE Wireless Commun. Lett.}, vol.~6, no.~6, pp. 706--709,
  June 2017.

\bibitem{Buzzi2019}
S.~Buzzi \emph{et~al.}, ``User-centric {5G} cellular networks: Resource
  allocation and comparison with the cell-free massive {MIMO} approach,''
  \emph{IEEE Trans. Wireless Commun.}, vol.~19, no.~2, pp. 1250--1264, Feb.
  2020.

\bibitem{Bjoernson2019a}
E.~{Bj{\"o}rnson} and L.~{Sanguinetti}, ``Scalable cell-free massive {MIMO}
  systems,'' \emph{IEEE Tran. Commun.}, vol.~68, no.~7, pp. 4247--4261, July
  2020.

\bibitem{Bjornson2011}
E.~Bj{\"o}rnson \emph{et~al.}, ``Optimality properties, distributed strategies,
  and measurement-based evaluation of coordinated multicell {OFDMA}
  transmission,'' \emph{IEEE Trans. Signal Process.}, vol.~59, no.~12, pp.
  6086--6101, Dec. 2011.

\bibitem{Qi2010}
J.~Qi and S.~A{\"\i}ssa, ``Analysis and compensation of {I/Q} imbalance in
  {MIMO} transmit-receive diversity systems,'' \emph{IEEE Trans. Commun.},
  vol.~58, no.~5, pp. 1546--1556, May 2010.

\bibitem{Pitarokoilis2015}
A.~Pitarokoilis, S.~Mohammed, and E.~Larsson, ``Uplink performance of
  time-reversal {MRC} in massive {MIMO} systems subject to phase noise,''
  \emph{IEEE Trans. Wireless Commun.}, vol.~14, no.~2, pp. 711--723, Feb. 2015.

\bibitem{Papazafeiropoulos2016}
A.~K. Papazafeiropoulos, ``Impact of general channel aging conditions on the
  downlink performance of massive {MIMO},'' \emph{IEEE Trans. Veh. Tech.},
  vol.~66, no.~2, pp. 1428--1442, Feb. 2017.

\bibitem{Qi2012}
J.~Qi and S.~A{\"\i}ssa, ``On the power amplifier nonlinearity in {MIMO}
  transmit beamforming systems,'' \emph{IEEE Trans. Commun.}, vol.~60, no.~3,
  pp. 876--887, Mar. 2012.

\bibitem{Schenk2008}
T.~Schenk, \emph{RF imperfections in high-rate wireless systems: impact and
  digital compensation}.\hskip 1em plus 0.5em minus 0.4em\relax Springer
  Science \& Business Media, 2008.

\bibitem{Studer2010}
C.~Studer, M.~Wenk, and A.~Burg, ``{MIMO} transmission with residual
  transmit-{RF} impairments,'' in \emph{ITG/IEEE Work. Smart Ant. (WSA)}.\hskip
  1em plus 0.5em minus 0.4em\relax IEEE, 2010, pp. 189--196.

\bibitem{Zhang2018}
J.~Zhang \emph{et~al.}, ``Performance analysis and power control of cell-free
  massive {MIMO} systems with hardware impairments,'' \emph{IEEE Access},
  vol.~6, pp. 55\,302--55\,314, June 2018.

\bibitem{Zhang2019}
X.~Zhang \emph{et~al.}, ``Secure communications over cell-free massive {MIMO}
  networks with hardware impairments,'' \emph{IEEE Syst. J.}, vol.~14, no.~2,
  pp. 1909--1920, Feb. 2020.

\bibitem{Masoumi2019}
H.~{Masoumi} and M.~J. {Emadi}, ``Performance analysis of cell-free massive
  {MIMO} system with limited fronthaul capacity and hardware impairments,''
  \emph{IEEE Trans. Wireless Commun.}, vol.~19, no.~2, pp. 1038--1053, Feb.
  2020.

\bibitem{Hu2019}
X.~{Hu} \emph{et~al.}, ``Cell-free massive {MIMO} systems with low resolution
  {ADCs},'' \emph{IEEE Trans. Commun.}, vol.~67, no.~10, pp. 6844--6857, Oct.
  2019.

\bibitem{Zheng2020}
J.~Zheng \emph{et~al.}, ``Efficient receiver design for uplink cell-free
  massive {MIMO} with hardware impairments,'' \emph{IEEE Trans. Veh. Tech.},
  vol.~69, no.~4, pp. 4537--4541, Apr. 2020.

\bibitem{Bjornson2015}
E.~Bj{\"o}rnson, M.~Matthaiou, and M.~Debbah, ``Massive {MIMO} with non-ideal
  arbitrary arrays: {H}ardware scaling laws and circuit-aware design,''
  \emph{IEEE Trans. Wireless Commun.}, vol.~14, no.~8, pp. 4353--4368, Aug.
  2015.

\bibitem{Perlman2015}
S.~Perlman and A.~Forenza, ``An introduction to pcell,'' \emph{Artemis Networks
  white paper}, 2015.

\bibitem{Interdonato2019}
G.~Interdonato \emph{et~al.}, ``Ubiquitous cell-free massive {MIMO}
  communications,'' \emph{EURASIP J. Wireless Commun. Net.}, vol. 2019, no.~1,
  p. 197, Jan. 2019.

\bibitem{Neumann2018}
D.~Neumann, M.~Joham, and W.~Utschick, ``Covariance matrix estimation in
  massive {MIMO},'' \emph{IEEE Signal Process. Lett.}, vol.~25, no.~6, pp.
  863--867, June 2018.

\bibitem{Upadhya2018}
K.~Upadhya and S.~A. Vorobyov, ``Covariance matrix estimation for massive
  {MIMO},'' \emph{IEEE Signal Process. Lett.}, vol.~25, no.~4, pp. 546--550,
  Apr. 2018.

\bibitem{Demir2021}
{\"O}.~T. Demir \emph{et~al.}, ``Foundations of user-centric cell-free massive
  {MIMO},'' \emph{Foundations and Trends{\textregistered} in Signal
  Processing}, vol.~14, no. 3-4, pp. 162--472.

\bibitem{Joung2014}
J.~Joung \emph{et~al.}, ``A survey on power-amplifier-centric techniques for
  spectrum-and energy-efficient wireless communications,'' \emph{IEEE Commun.
  Surveys \& Tut.}, vol.~17, no.~1, pp. 315--333, Jan. 2014.

\bibitem{Petrovic2007}
D.~Petrovic, W.~Rave, and G.~Fettweis, ``Effects of phase noise on {OFDM}
  systems with and without {PLL}: Characterization and compensation,''
  \emph{IEEE Trans. Commun.}, vol.~55, no.~8, pp. 1607--1616, Aug. 2007.

\bibitem{Krishnan2015}
R.~Krishnan \emph{et~al.}, ``Linear massive {MIMO} precoders in the presence of
  phase noise-{A} large-scale analysis,'' \emph{IEEE Trans. Veh. Tech.},
  vol.~65, no.~5, pp. 3057--3071, May 2016.

\bibitem{Holma2011}
H.~Holma and A.~Toskala, \emph{LTE for UMTS: Evolution to LTE-Advanced}, Wiley,
  Ed., 2011.

\bibitem{Papazafeiropoulos2017a}
A.~Papazafeiropoulos, B.~Clerckx, and T.~Ratnarajah, ``Rate-splitting to
  mitigate residual transceiver hardware impairments in massive {MIMO}
  systems,'' \emph{IEEE Trans. Veh. Tech.}, vol.~66, no.~9, pp. 8196--8211,
  Sep. 2017.

\bibitem{Bjoernson2015}
E.~Bj{\"o}rnson, E.~G. Larsson, and T.~L. Marzetta, ``Massive {MIMO}: {T}en
  myths and one critical question,'' \emph{IEEE Commun. Mag.}, vol.~54, no.~2,
  pp. 114--123, Feb. 2016.

\bibitem{Bjoernson2014}
E.~Bj{\"o}rnson \emph{et~al.}, ``Massive {MIMO} systems with non-ideal
  hardware: {E}nergy efficiency, estimation, and capacity limits,'' \emph{IEEE
  Trans. Inf. Theory}, vol.~60, no.~11, pp. 7112--7139, Nov. 2014.

\bibitem{Zarei2017}
S.~Zarei \emph{et~al.}, ``Multi-cell massive {MIMO} systems with hardware
  impairments: {Uplink}-downlink duality and downlink precoding,'' \emph{IEEE
  Trans. Wireless Commun.}, vol.~16, no.~8, pp. 5115--5130, Aug. 2017.

\bibitem{Nayebi2017}
E.~Nayebi \emph{et~al.}, ``Precoding and power optimization in cell-free
  massive {MIMO} systems,'' \emph{IEEE Trans. Wireless Commun.}, vol.~16,
  no.~7, pp. 4445--4459, July 2017.

\bibitem{Medard2000}
M.~Medard, ``The effect upon channel capacity in wireless communications of
  perfect and imperfect knowledge of the channel,'' \emph{IEEE Trans. Inf.
  Theory}, vol.~46, no.~3, pp. 933--946, May 2000.

\bibitem{Hoydis2013}
J.~Hoydis, S.~ten Brink, and M.~Debbah, ``Massive {MIMO} in the {UL/DL} of
  cellular networks: How many antennas do we need?'' \emph{IEEE J. Select.
  Areas Commun.}, vol.~31, no.~2, pp. 160--171, Feb. 2013.

\bibitem{Couillet2011}
R.~Couillet and M.~Debbah, \emph{Random matrix methods for wireless
  communications}.\hskip 1em plus 0.5em minus 0.4em\relax Cambridge University
  Press, 2011.

\bibitem{Papazafeiropoulos2015a}
A.~K. Papazafeiropoulos and T.~Ratnarajah, ``Deterministic equivalent
  performance analysis of time-varying massive {MIMO} systems,'' \emph{IEEE
  Trans. Wireless Commun.}, vol.~14, no.~10, pp. 5795--5809, Oct. 2015.

\bibitem{3GPP2017}
3GPP, ``Further advancements for {E-UTRA} physical layer aspects ({Release}
  9),'' 3GPP TS 36.814, Tech. Rep., 2017.

\bibitem{Papazafeiropoulos2017}
A.~Papazafeiropoulos and T.~Ratnarajah, ``Towards a realistic assessment of
  multiple antenna {HCNs}: {R}esidual additive transceiver hardware impairments
  and channel aging,'' \emph{IEEE Trans. Veh. Tech.}, vol.~66, no.~10, pp.
  9061--9073, Oct. 2017.

\bibitem{Papazafeiropoulos2019}
A.~{Papazafeiropoulos} \emph{et~al.}, ``Nuts and bolts of a realistic
  stochastic geometric analysis of {mmWave} {HetNets}: {Hardware} impairments
  and channel aging,'' \emph{IEEE Trans. Veh. Tech.}, vol.~68, no.~6, pp.
  5657--5671, June 2019.

\bibitem{Zhang2020}
Y.~{Zhang} \emph{et~al.}, ``Power scaling of full-duplex two-way
  millimeter-wave relay with massive {MIMO},'' \emph{IEEE Trans. Veh. Tech.},
  vol.~69, no.~12, pp. 15\,298--15\,313, 2020.

\bibitem{Telatar1999}
E.~Telatar, ``Capacity of multi-antenna gaussian channels,'' \emph{Europ.
  Trans. on Telecom.}, vol.~10, no.~6, pp. 585--595, June 1999.

\bibitem{Hassibi2003}
B.~Hassibi and B.~Hochwald, ``How much training is needed in multiple-antenna
  wireless links?'' \emph{IEEE Trans. Inform. Theory}, vol.~49, no.~4, pp.
  951--963, Apr. 2003.

\bibitem{Bai2010}
Z.~Bai and J.~W. Silverstein, \emph{Spectral analysis of large dimensional
  random matrices}.\hskip 1em plus 0.5em minus 0.4em\relax Springer, 2010,
  vol.~20.

\bibitem{Wagner2012}
S.~Wagner \emph{et~al.}, ``Large system analysis of linear precoding in
  correlated {MISO} broadcast channels under limited feedback,'' \emph{IEEE
  Trans. Inform. Theory}, vol.~58, no.~7, pp. 4509--4537, July 2012.

\bibitem{Tao2012}
T.~Tao, \emph{Topics in random matrix theory}.\hskip 1em plus 0.5em minus
  0.4em\relax American Mathematical Soc., 2012, vol. 132.

\end{thebibliography}
\end{document}